\documentclass[a4paper, 12pt]{article}
\usepackage{amssymb, amsmath, amsthm}

\newcommand\R{\mathbb{R}}
\newcommand\C{\mathbb{C}}
\newcommand\N{\mathbb{N}}
\newcommand\B{\mathbb{B}}

\renewcommand\i{{\rm 1\kern -.3600em 1}}

\newcommand\vd{\delta}\newcommand\eps{\varepsilon}

\newtheorem{theorem}{Theorem}[section]
\newtheorem{corollary}[theorem]{Corollary}
\newtheorem{lemma}[theorem]{Lemma}
\newtheorem{proposition}[theorem]{Proposition}
\newtheorem{definition}[theorem]{Definition}
\newtheorem{example}[theorem]{Example}
\newtheorem{remark}[theorem]{Remark}

\numberwithin{equation}{section}
\allowdisplaybreaks[3]
\usepackage{color}

\begin{document}

\author{\textbf{Dmitri L.~Finkelshtein}\\
{\small Institute of Mathematics, Ukrainian National Academy of
Sciences, 01601 Kiev, Ukraine}\\
{\small fdl@imath.kiev.ua} \and
\textbf{Yuri G.~Kondratiev} \\
{\small Fakult\"at f\"ur Mathematik, Universit\"at Bielefeld,
D 33615 Bielefeld, Germany}\\
{\small Forschungszentrum BiBoS, Universit\"at Bielefeld, 
D 33615 Bielefeld, Germany}\\
{\small kondrat@mathematik.uni-bielefeld.de} \and
\textbf{Maria Jo\~{a}o Oliveira} \\
{\small Universidade Aberta, P 1269-001 Lisbon, Portugal}\\
{\small CMAF, University of Lisbon, P 1649-003 Lisbon, Portugal}\\
{\small oliveira@cii.fc.ul.pt}}

\title{Glauber dynamics in the continuum via generating functionals evolution}

\date{}
\maketitle

\begin{abstract}
We construct the time evolution for states of Glauber dynamics for a spatial 
infinite particle system in terms of generating functionals. This is carried 
out by an Ovsjannikov-type result in a scale of Banach spaces, leading to a 
local (in time) solution which, under certain initial conditions, might be 
extended to a global one. An application of this approach to Vlasov-type 
scaling in terms of generating functionals is considered as well.
\end{abstract}

\noindent \textbf{Keywords:} Generating functional, Glauber dynamics, 
Interacting particle system, Continuous system, Ovsjannikov's method, Vlasov 
scaling

\medskip

\noindent \textbf{Mathematics Subject Classification (2010):} 82C22, 46G20, 
46E50

\newpage

\section{Introduction}

Originally, Bogoliubov generating functionals (shortly GF) were introduced by 
N.~N.~Bogoliubov in \cite{Bog46} to define correlation functions for 
statistical mechanics systems. Apart from this specific application, and many 
others, GF are, by themselves, a subject of interest 
in infinite dimensional analysis. This is partially due to the fact that to 
a probability measure $\mu$ defined on the space $\Gamma$ of locally finite 
configurations $\gamma\subset\R^d$ one may associate a GF
$$
B_\mu(\theta) :=\int_\Gamma d\mu(\gamma)\,\prod_{x\in \gamma }(1+\theta(x)),
$$
yielding an alternative method to study the stochastic dynamics of an 
infinite particle system in the continuum by exploiting the close relation 
between measures and GF \cite{FKO05,KoKuOl02}. 

Within the semigroups theory, a non-equilibrium Glauber dynamics has been  
constructed through evolution equations for correlation functions in 
\cite{FKK09,FKKZ2010,KoKtZh06}. However, within the GF context, semigroup 
techniques seem do not work. Alternatively, existence and uniqueness 
results for the Glauber dynamics through GF arise naturally from Picard-type 
approximations and a method suggested in \cite[Appendix 2, A2.1]{GS58} in a 
scale of Banach spaces (Theorem \ref{Th1}). This method, originally presented 
for equations with coefficients time independent, has been extended to an 
abstract and general framework by T.~Yamanaka in \cite{Y60} and 
L.~V.~Ovsjannikov in \cite{O65} in the linear case, and many applications were 
exposed by F.~Treves in \cite{T68}. As an aside, within an analytical 
framework outside of our setting, all these statements are very closely 
related to variants of the abstract Cauchy-Kovalevskaya theorem. However, all 
these abstract forms, namely, Theorem \ref{Th1}, only yield a local solution, 
that is, a solution which is defined on a finite time interval. Moreover, 
starting with an initial condition from a certain Banach space, in general the 
solution evolves on larger Banach spaces. It is only for a certain class of 
initial conditions that the solution does not leave the initial Banach space. 
In this case, the solution might be extended to a global solution (Corollary 
\ref{bdd}).    

As a particular application, this work concludes with the study of the 
Vlasov-type scaling proposed in \cite{FKK10a} for generic continuous particle 
systems and accomplished in \cite{FKK10} for the Glauber dynamics. The general 
scheme proposed in \cite{FKK10a} for correlation functions yields a limiting 
hierarchy which possesses a chaos preservation property, namely, starting with 
a Poissonian (non-homogeneous) initial state this structural property is 
preserved during the time evolution. In Section \ref{Subsection3.2} the 
same problem is formulated in terms of GF and its analysis is carried out by
Ovsjannikov-type approximations in a scale of Banach spaces 
(Theorem \ref{Thconv}).

For further applications, let us pointing out that the alternative technical 
standpoint presented in this work shows to be efficient as well on the 
treatment of other types of stochastic dynamics of infinite particle systems, 
namely, the Kawasaki type dynamics in the continuum. This and other cases are 
now being studied and will be reported in forthcoming publications.

\section{General Framework}\label{Section2}

In this section we briefly recall the concepts and results of combinatorial
harmonic analysis on configuration spaces and Bogoliubov generating
functionals needed throughout this work (for a detailed explanation see
\cite{KoKu99,KoKuOl02}).

\subsection{Harmonic analysis on configuration spaces}\label{Subsection2.1}

Let $\Gamma :=\Gamma _{\R^d}$ be the configuration space over $\mathbb{R}^d$,
$d\in\mathbb{N}$,
\[
\Gamma :=\left\{ \gamma \subset \mathbb{R}^d:\left| \gamma\cap\Lambda\right|
<\infty \hbox{
for every compact }\Lambda\subset \mathbb{R}^d\right\} ,
\]
where $\left| \cdot \right|$ denotes the cardinality of a set. We
identify each $\gamma \in \Gamma $ with the non-negative Radon measure
$\sum_{x\in \gamma }\delta_x$ on the Borel $\sigma$-algebra
$\mathcal{B}(\mathbb{R}^d)$, where $\delta_x$ is the Dirac measure with mass
at $x$, which allows to endow $\Gamma$ with the vague topology and the
corresponding Borel $\sigma$-algebra $\mathcal{B}(\Gamma)$.

For any $n\in\N_0:=\N\cup\{0\}$ let
\[
\Gamma^{(n)}:= \{ \gamma\in \Gamma: \vert \gamma\vert = n\},\ n\in \N,\quad \Gamma^{(0)} := \{\emptyset\}.
\]
Clearly, each $\Gamma^{(n)}$, $n\in\N$, can be identify with the
symmetrization of the set $ \{(x_1,...,x_n)\in (\R^d)^n: x_i\not=
x_j \hbox{ if } i\not= j\}$ under the permutation group over
$\{1,...,n\}$, which induces a natural (metrizable) topology on
$\Gamma^{(n)}$ and the corresponding Borel $\sigma$-algebra
$\mathcal{B}(\Gamma^{(n)})$ as well. This leads to the space of
finite configurations
\[
\Gamma_0 := \bigsqcup_{n=0}^\infty \Gamma^{(n)}
\]
endowed with the topology of disjoint union of topological spaces and the
corresponding Borel $\sigma$-algebra $\mathcal{B}(\Gamma_0)$.

Let now $\mathcal{B}_c(\R^d)$ be the set of all bounded Borel sets in
$\R^d$, and for each $\Lambda\in \mathcal{B}_c(\R^d)$ let
$\Gamma_\Lambda := \{\eta\in \Gamma: \eta\subset \Lambda\}$.
Evidently $\Gamma_\Lambda = \bigsqcup_{n=0}^\infty
\Gamma_\Lambda^{(n)}$, where $\Gamma_\Lambda^{(n)}:= \Gamma_\Lambda
\cap \Gamma^{(n)}$, $n\in \N_0$, leading 
to a situation similar to the one for $\Gamma_0$, described above. Given a 
complex-valued $\mathcal{B}(\Gamma_0)$-measurable function $G$ such that
$G\!\!\upharpoonright _{\Gamma\backslash\Gamma_\Lambda}\equiv 0$ for
some $\Lambda \in \mathcal{B}_c(\R^d)$,  the $K$-transform of $G$ is
a mapping $KG:\Gamma\to\mathbb{C}$ defined at each
$\gamma\in\Gamma$ by
\begin{equation}
(KG)(\gamma ):=\sum_{{\eta \subset \gamma}\atop{\vert\eta\vert < \infty} }
G(\eta ).
\label{Eq2.9}
\end{equation}
It has been shown in \cite{KoKu99} that the $K$-transform is a linear and
invertible mapping.

Among the functions in the domain of the $K$-transform we distinguish
the so-called coherent states $e_\lambda(f)$, defined for
complex-valued $\mathcal{B}(\R^d)$-meas\-urable functions $f$ by
$$
e_\lambda (f,\eta ):=\prod_{x\in \eta }f\left( x\right) ,\ \eta \in
\Gamma _0\!\setminus\!\{\emptyset\},\quad  e_\lambda (f,\emptyset ):=1.
$$
The special role of these functions is partially due to the fact that
their image under the $K$-transform coincides with the integrand functions of 
generating functionals (Subsection \ref{Subsection2.2} below). More precisely, 
for any $f$ described as before having in addition compact support, for all 
$\gamma\in \Gamma$
\begin{equation}\label{Kcog}
\left( Ke_\lambda (f)\right) (\gamma )=\prod_{x\in \gamma }(1+f(x)).
\end{equation}

Let $\mathcal{M}_{\mathrm{fm}}^1(\Gamma)$ be the set of all
probability measures $\mu$ on $(\Gamma ,\mathcal{B}(\Gamma))$ with
finite local moments of all orders, i.e.,
\[
\int_\Gamma d\mu (\gamma)\, |\gamma\cap\Lambda |^n<\infty\quad
\mathrm{for\,\,all}\,\,n\in\N
\mathrm{\,\,and\,\,all\,\,} \Lambda \in \mathcal{B}_c(\R^d),
\]
and let $B_{\mathrm{bs}}(\Gamma_0)$ be the set of all complex-valued
bounded $\mathcal{B}(\Gamma_0)$-measurable functions with bounded
support, i.e., $G\!\!\upharpoonright _{\Gamma _0\backslash
\left(\bigsqcup_{n=0}^N\Gamma _\Lambda ^{(n)}\right) }\equiv 0$ for
some $N\in\N_0, \Lambda \in \mathcal{B}_c(\R^d)$. Given a
$\mu\in\mathcal{M}_{\mathrm{fm}}^1(\Gamma)$, the so-called
correlation measure $\rho_\mu$ corresponding to $\mu$ is a measure
on $(\Gamma _0,\mathcal{B}(\Gamma _0))$ defined for all $G\in
B_{\mathrm{bs}}(\Gamma_0)$ by
\begin{equation}
\int_{\Gamma _0}d\rho _\mu(\eta )\,G(\eta )=\int_\Gamma d\mu (\gamma)\, \left(
KG\right) (\gamma).  \label{Eq2.16}
\end{equation}
Observe that under the above conditions $K\!\left|G\right|$ is
$\mu$-integrable. In terms of correlation measures this means that
$B_{\mathrm{bs}}(\Gamma_0)\subset
L^1(\Gamma_0,\rho_\mu)$.\footnote{Throughout this work all
$L^p$-spaces, $p\geq 1$, consist of complex-valued functions.}

Actually, $B_{\mathrm{bs}}(\Gamma_0)$ is dense in
$L^1(\Gamma_0,\rho_\mu)$. Moreover, still by (\ref{Eq2.16}), on
$B_{\mathrm{bs}}(\Gamma_0)$ the inequality $\Vert
KG\Vert_{L^1(\mu)}\leq \Vert G\Vert_{L^1(\rho_\mu)}$ holds, allowing
an extension of the $K$-transform to a bounded operator
$K:L^1(\Gamma_0,\rho_\mu)\to L^1(\Gamma,\mu)$ in such a way that
equality (\ref{Eq2.16}) still holds for any $G\in
L^1(\Gamma_0,\rho_\mu)$. For the extended operator the explicit form
(\ref{Eq2.9}) still holds, now $\mu$-a.e. This means, in particular,
that for all $\mathcal{B}(\R^d)$-measurable functions $f$ such that
$e_\lambda(f)\in L^1(\Gamma_0,\rho_\mu)$ equality \eqref{Kcog} holds for 
$\mu$-a.a.~$\gamma\in\Gamma$ .

\begin{example} The Poisson measure $\pi:=\pi_{dx}$ with intensity the
Lebesgue measure $dx$ on $\R^d$ is the probability measure defined on
$(\Gamma,\mathcal{B}(\Gamma))$ by
\[
\int_\Gamma d\pi(\gamma)\,\exp \left( \sum_{x\in \gamma }\varphi (x)\right)
=\exp \left( \int_{\R^d}dx\,\left( e^{\varphi (x)}-1\right)\right)
\]
for all real-valued smooth functions $\varphi$ on $\R^d$ with
compact support. The correlation measure corresponding to $\pi$ is
the so-called Lebesgue--Poisson measure,
\begin{equation*}
\lambda:=\lambda_{dx}:=\sum_{n=0}^\infty \frac{1}{n!} m^{(n)},
\end{equation*}
where $m^{(n)}$, $n\in \N$, is the measure on
$(\Gamma^{(n)},\mathcal{B}(\Gamma^{(n)}))$ obtained by symmetrization of the
Lebesgue product measure $(dx)^{\otimes n}$ through the symmetrization
procedure described above. For $n=0$ we set $m^{(0)}(\{\emptyset\}):=1$. This 
special case emphasizes the technical role of coherent states in our setting, 
namely, due to the fact $e_\lambda(f)\in L^p(\Gamma_0,\lambda)$ whenever 
$f\in L^p:=L^p(\R^d,dx)$ for some $p\geq 1$, and, moreover,
$\| e_\lambda(f)\|^p_{L^p}=\exp(\| f\|^p_{L^p})$. In particular, for $p=1$, one
additionally has
\begin{equation}
\int_{\Gamma_0}d\lambda(\eta)\, e_\lambda(f,\eta)= \exp\left(\int_{\R^d}dx\,f(x)\right),\label{meanLP}
\end{equation}
for all $f\in L^1$. For more details see \cite{KoKuOl00b}.
\end{example}

\subsection{Bogoliubov generating functionals}\label{Subsection2.2}

Given a probability measure $\mu$ on $(\Gamma, \mathcal{B} (\Gamma))$ the
so-called Bogoliubov generating functional (shortly GF) $B_\mu$
corresponding to $\mu$ is the functional defined at each
$\mathcal{B}(\R^d)$-measurable function $\theta$ by
\begin{equation}
B_\mu(\theta) :=\int_\Gamma d\mu(\gamma)\,\prod_{x\in\gamma}(1+\theta (x)),
\label{Dima2}
\end{equation}
provided the right-hand side exists. It is clear from (\ref{Dima2}) 
that one cannot define the GF for all probability measures on $\Gamma$ but, if 
it exists for some measure $\mu$, then the domain of $B_\mu$ depends on 
$\mu$ and, conversely, the domain of $B_\mu$ reflects special properties over 
the underlying measure $\mu$ \cite{KoKuOl02}. For instance, if $\mu$ has finite 
local exponential moments, i.e.,
\[
\int_\Gamma d\mu (\gamma )\, e^{\alpha|\gamma\cap\Lambda |}<\infty \quad
\hbox{for all}\,\,\alpha>0\,\,
\hbox{and all}\,\,\Lambda \in \mathcal{B}_c(\R^d),
\]
then $B_\mu$ is well-defined, for instance, on all bounded functions
$\theta$ with compact support. According to the previous subsection, this 
implies that to such a measure $\mu$ one may associate the correlation measure 
$\rho_\mu$, leading to a description of the functional $B_\mu$ in terms of 
either the measure $\rho_\mu$:
\[
B_\mu(\theta)
= \int_\Gamma d\mu(\gamma)\,\left( Ke_\lambda (\theta)\right) (\gamma)
= \int_{\Gamma_0}d\rho_\mu(\eta)\, e_\lambda (\theta, \eta),
\]
or the so-called correlation function $k_\mu:=\frac{d\rho_\mu}{d\lambda}$
corresponding to the measure $\mu$, if $\rho_\mu$ is absolutely continuous
with respect to the Lebesgue--Poisson measure $\lambda$:
\begin{equation}\label{BF_via_cf}
B_\mu(\theta)=\int_{\Gamma_0}d\lambda(\eta)\, e_\lambda (\theta, \eta)k_\mu(\eta).
\end{equation}

Throughout this work we will consider GF defined on the whole complex $L^1$ 
space. Furthermore, we will assume that the GF are entire. We recall that a 
functional $A:L^1\to\C$ is entire on $L^1$ whenever $A$ is locally bounded and 
for all $\theta_0,\theta\in L^1$ the mapping 
$\C\ni z\mapsto A(\theta_0 + z\theta)\in\C$ is entire. Thus, at each 
$\theta_0\in L^1$, every entire functional $A$ on $L^1$ has a representation 
in terms of its Taylor expansion,
$$
A(\theta_0+ z \theta)=\sum_{n=0}^\infty \frac{z^n}{n!}
d^nA(\theta_0;\theta ,...,\theta),\quad z\in\C, \theta\in L^1.
$$

The next theorem states properties specific for entire functionals $A$ on
$L^1$ and their higher order derivatives $d^nA(\theta_0;\cdot)$ (for a
detailed explanation see \cite{KoKuOl02} and the references therein).

\begin{theorem}
\label{9Prop9.1.1}Let $A$ be an entire functional on $L^1$. Then
each differential $d^nA(\theta _0;\cdot), n\in\N, \theta _0\in L^1$
is defined by a symmetric kernel $$\vd^n A(\theta_0;\cdot)\in
L^\infty (\R^{dn}):=L^\infty \bigl((\R^d)^n,(dx)^{\otimes n}\bigr)$$
called the variational derivative of $n$-th order of $A$ at the
point $\theta _0$. More precisely, 
\begin{eqnarray*}
d^nA(\theta _0;\theta _1,...,\theta _n) &:=&\frac{\partial ^n}{\partial
z_1...\partial z_n}A\left( \theta _0+\sum_{i=1}^nz_i\theta _i\right) %
\Bigg\vert_{z_1=...=z_n=0} \\
&\hbox{\rm{=:}}&\int_{(\R^d)^n}dx_1\ldots dx_n\,\vd^n
A(\theta_0;x_1,\ldots,x_n)\prod_{i=1}^n\theta_i(x_i)
\end{eqnarray*}
for all $\theta _1,...,\theta _n\in L^1$. Moreover, the operator
norm of the bounded $n$-linear functional $d^nA(\theta_0;\cdot)$ is
equal to $\left\| \vd^n
A(\theta_0;\cdot)\right\|_{L^\infty(\R^{dn})}$ and for all $r>0$ one
has
\begin{equation}
\left\| \vd A (\theta_0;\cdot)\right\|_{L^\infty(\R^d)} \leq
\frac{1}{r} \sup_{\|\theta^\prime \|_{L^1} \leq r}
|A(\theta_0+\theta^\prime)|\label{2011!}
\end{equation}
and, for $n\geq 2$,
\begin{equation}
\left\| \vd^nA(\theta_0;\cdot)\right\|_{L^\infty(\R^{dn})} \leq n!
\left(\frac{e}{r}\right)^n \sup_{\|\theta^\prime \|_{L^1} \leq r}
|A(\theta_0+\theta^\prime)|. \label{Duarte}
\end{equation}
\end{theorem}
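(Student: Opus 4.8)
The plan is to deduce all four assertions from two ingredients: Cauchy's integral estimates applied to the scalar entire functions obtained by restricting $A$ to finite-dimensional affine subspaces, and the duality between bounded $n$-linear forms on $L^1$ and $L^\infty$ kernels. Fix $\theta_0\in L^1$ and set $M_r(\theta_0):=\sup_{\|\theta'\|_{L^1}\le r}|A(\theta_0+\theta')|$, which is finite because $A$ is locally bounded. First I would treat the diagonal: for $\theta\in L^1$ with $\|\theta\|_{L^1}=1$ the map $z\mapsto A(\theta_0+z\theta)$ is entire by hypothesis, so Cauchy's formula on $|z|=r$ gives
\begin{equation*}
\frac{1}{n!}\,d^nA(\theta_0;\theta,\dots,\theta)=\frac{1}{2\pi i}\oint_{|z|=r}\frac{A(\theta_0+z\theta)}{z^{n+1}}\,dz .
\end{equation*}
Since $\|z\theta\|_{L^1}=r$ on this circle, we obtain $|d^nA(\theta_0;\theta,\dots,\theta)|\le n!\,r^{-n}M_r(\theta_0)$. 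For $n=1$ this is already \eqref{2011!} once we take the supremum over $\|\theta\|_{L^1}=1$ and recall that the operator norm of a linear functional equals the supremum over the unit sphere.

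Next, for $n\ge 2$, I would pass from the diagonal values to the full symmetric form by the polarization identity, which both reconstructs $d^nA(\theta_0;\theta_1,\dots,\theta_n)$ from the quantities $d^nA(\theta_0;\theta,\dots,\theta)$ and yields the polarization inequality $\|d^nA(\theta_0;\cdot)\|_{\mathrm{op}}\le \frac{n^n}{n!}\sup_{\|\theta\|_{L^1}=1}|d^nA(\theta_0;\theta,\dots,\theta)|$. Combining this with the diagonal estimate gives $\|d^nA(\theta_0;\cdot)\|_{\mathrm{op}}\le \frac{n^n}{r^n}M_r(\theta_0)$, and the elementary bound $n^n\le e^n n!$ (from $e^n=\sum_k n^k/k!\ge n^n/n!$) upgrades this to $\|d^nA(\theta_0;\cdot)\|_{\mathrm{op}}\le n!\,(e/r)^n M_r(\theta_0)$, which is \eqref{Duarte} after the norm identification below. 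In particular $d^nA(\theta_0;\cdot)$ is a bounded $n$-linear form on $(L^1)^n$, and its symmetry follows from the equality of mixed partial derivatives of the several-variable entire function $(z_1,\dots,z_n)\mapsto A\bigl(\theta_0+\sum_i z_i\theta_i\bigr)$.

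It remains to represent this bounded symmetric form by an $L^\infty$ kernel and to match the two norms. Here I would invoke the universal property of the projective tensor product: a bounded $n$-linear functional on $L^1\times\cdots\times L^1$ corresponds isometrically to a bounded linear functional on $L^1\,\hat\otimes\cdots\hat\otimes\,L^1$. Using the classical isometric identifications $L^1(\R^d)\,\hat\otimes\cdots\hat\otimes\,L^1(\R^d)\cong L^1\bigl((\R^d)^n\bigr)$ and $\bigl(L^1((\R^d)^n)\bigr)^*\cong L^\infty\bigl((\R^d)^n\bigr)$, the form $d^nA(\theta_0;\cdot)$ is given by integration against a kernel $\vd^nA(\theta_0;\cdot)\in L^\infty(\R^{dn})$, and the chain of isometries forces $\|d^nA(\theta_0;\cdot)\|_{\mathrm{op}}=\|\vd^nA(\theta_0;\cdot)\|_{L^\infty(\R^{dn})}$; the symmetry of the form lets us choose the kernel symmetric. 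Feeding the operator-norm bounds of the previous step through this equality delivers \eqref{2011!} and \eqref{Duarte}.

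The step I expect to be the main obstacle is this isometric kernel representation: one must verify that the projective norm on $L^1\,\hat\otimes\cdots\hat\otimes\,L^1$ genuinely coincides with the $L^1((\R^d)^n)$ norm, so that the dual is $L^\infty$ with the correct norm, and that the essential supremum of the kernel equals the supremum defining the operator norm of the form. By comparison, the Cauchy estimates and the polarization constant are routine; the latter only affects the sharpness, not the validity, of \eqref{Duarte}.
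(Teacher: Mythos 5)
Your proposal is correct and follows essentially the same route as the proof this paper relies on (the paper itself gives no proof of Theorem \ref{9Prop9.1.1}, deferring to \cite{KoKuOl02} and the references therein): Cauchy estimates along complex lines, the polarization inequality with constant $n^n/n!\leq e^n$, and the isometric identification of bounded symmetric $n$-linear forms on $L^1$ with $L^\infty(\R^{dn})$ kernels via $L^1\hat\otimes_\pi\cdots\hat\otimes_\pi L^1\cong L^1(\R^{dn})$ and $\bigl(L^1(\R^{dn})\bigr)^*\cong L^\infty(\R^{dn})$. The only inaccuracy is your claim that $\sup_{\|\theta'\|_{L^1}\leq r}|A(\theta_0+\theta')|$ is finite for every $r>0$ by local boundedness --- the paper explicitly remarks after the theorem that this can fail unless $A$ is in addition of bounded type --- but this is harmless for your argument, since \eqref{2011!} and \eqref{Duarte} hold trivially when the supremum is infinite, and the existence and boundedness of the kernel only require finiteness for some sufficiently small $r>0$, which local boundedness does provide.
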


The first part of Theorem \ref{9Prop9.1.1} stated for GF and their variational
derivatives at $\theta_0=0$ yields the next result.

\begin{proposition}
\label{9Prop9.1.3} Let $B_\mu$ be an entire GF on
$L^1$. Then the measure $\rho_\mu$ is absolutely continuous
with respect to the Lebesgue--Poisson measure $\lambda$ and
the Radon--Nykodim derivative $k_\mu=\dfrac{d\rho_\mu}{d\lambda}$ is
given by
\[
k_\mu(\eta )= \vd^{\left|\eta\right|} B_\mu(0;\eta)
\quad \text{for $\lambda$-a.a.
}\eta \in \Gamma _0. 
\]
\end{proposition}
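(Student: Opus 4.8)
The plan is to compare two power-series expansions of the entire map $z\mapsto B_\mu(z\theta)$ and to read off the density from the equality of coefficients. First I would start from the representation $B_\mu(\theta)=\int_{\Gamma_0}d\rho_\mu(\eta)\,e_\lambda(\theta,\eta)$ recorded in Subsection \ref{Subsection2.2}. Replacing $\theta$ by $z\theta$ and using $e_\lambda(z\theta,\eta)=z^{|\eta|}e_\lambda(\theta,\eta)$, I would split the integral over $\Gamma_0=\bigsqcup_n\Gamma^{(n)}$ and interchange summation and integration to obtain
\begin{equation*}
B_\mu(z\theta)=\sum_{n=0}^\infty z^n\int_{\Gamma^{(n)}}d\rho_\mu(\eta)\,e_\lambda(\theta,\eta).
\end{equation*}
The interchange is the first technical point: it is justified by Fubini--Tonelli once one checks that $\int_{\Gamma_0}d\rho_\mu(\eta)\,e_\lambda(|z||\theta|,\eta)=B_\mu(|z||\theta|)<\infty$, which holds since $|z||\theta|\in L^1$ and $B_\mu$ is finite on $L^1$; it is convenient to verify this first for bounded $\theta$ with compact support and then extend.

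On the other hand, Theorem \ref{9Prop9.1.1} gives the Taylor expansion of the entire functional $B_\mu$ at $\theta_0=0$,
\begin{equation*}
B_\mu(z\theta)=\sum_{n=0}^\infty\frac{z^n}{n!}\,d^nB_\mu(0;\theta,\dots,\theta)=\sum_{n=0}^\infty\frac{z^n}{n!}\int_{(\R^d)^n}dx_1\cdots dx_n\,\vd^nB_\mu(0;x_1,\dots,x_n)\prod_{i=1}^n\theta(x_i).
\end{equation*}
By uniqueness of the coefficients of the power series of an entire function of $z$, I would equate the two expressions term by term, for each $n$, obtaining
\begin{equation*}
\int_{\Gamma^{(n)}}d\rho_\mu(\eta)\,e_\lambda(\theta,\eta)=\frac{1}{n!}\int_{(\R^d)^n}dx_1\cdots dx_n\,\vd^nB_\mu(0;x_1,\dots,x_n)\prod_{i=1}^n\theta(x_i).
\end{equation*}

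Next I would rewrite the right-hand side as an integral over $\Gamma^{(n)}$. Since $\vd^nB_\mu(0;\cdot)$ is a symmetric kernel and $\lambda$ restricted to $\Gamma^{(n)}$ equals $\frac{1}{n!}m^{(n)}$, setting $k_\mu(\eta):=\vd^{|\eta|}B_\mu(0;\eta)$ and using the symmetrization identification of $\Gamma^{(n)}$ with $(\R^d)^n$ turns the displayed identity into
\begin{equation*}
\int_{\Gamma^{(n)}}d\rho_\mu(\eta)\,e_\lambda(\theta,\eta)=\int_{\Gamma^{(n)}}d\lambda(\eta)\,e_\lambda(\theta,\eta)\,k_\mu(\eta),\qquad n\in\N_0.
\end{equation*}
Because $k_\mu$ is bounded on each $\Gamma^{(n)}$ (it is an $L^\infty$ kernel by Theorem \ref{9Prop9.1.1}), the right-hand side is a genuine measure absolutely continuous with respect to $\lambda$; it then remains to conclude that the two sides coincide as measures and to sum over $n$.

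The main obstacle is this last identification: passing from equality of the coherent-state integrals, for all admissible $\theta$, to equality of the underlying measures on $\Gamma^{(n)}$. I would handle it by noting that, lifting both sides to symmetric measures on $(\R^d)^n$, the functions $\prod_{i=1}^n\theta(x_i)$ --- equivalently, via polarization, all products $\prod_{i=1}^n\theta_i(x_i)$ with $\theta_1,\dots,\theta_n$ ranging over a determining class in $L^1$ --- span a set that separates finite symmetric measures on $(\R^d)^n$. Hence the symmetric measure associated with $\rho_\mu$ on $\Gamma^{(n)}$ coincides with $k_\mu\,(dx)^{\otimes n}/n!$, giving $\rho_\mu\!\!\upharpoonright_{\Gamma^{(n)}}\ll\lambda\!\!\upharpoonright_{\Gamma^{(n)}}$ with density $k_\mu$. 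Summing over $n\in\N_0$ yields $\rho_\mu\ll\lambda$ and $\frac{d\rho_\mu}{d\lambda}=k_\mu=\vd^{|\eta|}B_\mu(0;\eta)$ for $\lambda$-a.a.\ $\eta\in\Gamma_0$, as claimed.
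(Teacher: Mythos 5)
Your proposal is correct and coincides with the route the paper itself takes: the paper obtains Proposition \ref{9Prop9.1.3} precisely by applying the first part of Theorem \ref{9Prop9.1.1} at $\theta_0=0$ and comparing the resulting Taylor expansion with the representation $B_\mu(\theta)=\int_{\Gamma_0}d\rho_\mu(\eta)\,e_\lambda(\theta,\eta)$, deferring the details (your coefficient identification and polarization step) to \cite{KoKuOl02}. Only one phrase needs adjusting: the lifted symmetric measures on $(\R^d)^n$ are in general only $\sigma$-finite rather than finite (e.g.\ for the Poisson measure $\rho_\pi=\lambda$ one has $\lambda(\Gamma^{(n)})=\infty$), so the final identification should invoke uniqueness of $\sigma$-finite measures agreeing on the $\pi$-system of products of bounded Borel sets, which contains an exhausting sequence on which both measures are finite thanks to the finite local moments of $\mu$.
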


Concerning the second part of Theorem \ref{9Prop9.1.1}, namely, estimates 
(\ref{2011!}) and (\ref{Duarte}), we note that $A$ being entire does not ensure 
that for every $r>0$ the supremum appearing on the right-hand side of 
(\ref{2011!}), (\ref{Duarte}) is always finite. This will hold if, in 
addition, the entire functional $A$ is of bounded type, that is,
\[
\forall\,r>0,\
\sup_{\Vert\theta\Vert_{L^1}\leq r}\left|A(\theta_0+\theta)\right|<\infty,\ \forall\,\theta_0\in L^1.
\]
Hence, as a consequence of Proposition \ref{9Prop9.1.3}, it follows from
(\ref{2011!}) and (\ref{Duarte}) that the correlation function $k_\mu$ of an 
entire GF of bounded type on $L^1$ fulfills the so-called generalized Ruelle
bound, that is, for any $0\leq\varepsilon\leq 1$ and any $r>0$ there is some 
constant $C\geq 0$ depending on $r$ such that
\begin{equation}
k_\mu(\eta )\leq C\left(\left|\eta\right|!\right)^{1-\varepsilon}
\left( \frac er\right) ^{\left|\eta\right|},\quad
\lambda \mathrm{-a.a.}\text{ }\eta \in\Gamma _0.\label{KuRu}
\end{equation}
In our case, $\varepsilon=0$. We observe that if (\ref{KuRu}) holds for 
$\varepsilon =1$ and for at least one $r>0$, then condition (\ref{KuRu}) is 
the classical Ruelle bound. In terms of GF, the latter means that 
\[
|B_\mu(\theta)|\leq C\exp\left(\dfrac{e}{r}\|\theta\|_{L^1}\right),
\]
as can be easily checked using the representation \eqref{BF_via_cf} and 
\eqref{meanLP}. This special case motivates the definition of the following 
family of Banach spaces, see \cite[Proposition 23]{KoKuOl02}.

\begin{definition}\label{Bs} For each $\alpha>0$, let $\mathcal{E}_\alpha$ be 
the Banach space of all entire functionals $B$ on $L^1$ such that
\[
\left\| B\right\| _\alpha :=\sup_{\theta \in L^1}
\left( \left|
B(\theta )\right| e^{-\frac{1}{\alpha} \left\| \theta \right\| _{L^1}}\right)
<\infty .
\]
\end{definition}

\subsection{Time evolution equations}\label{Subsection2.3}

Informally, the stochastic evolution of an interacting particle 
system on $\mathbb{R}^d$ may be described through a Markov generator $L$ 
defined on a proper space of functions on $\Gamma$. The problem of 
construction of the corresponding Markov process on $\Gamma$
is related to the existence (on a proper space of functions) of the 
semigroup corresponding to $L$, which will be the solution to a Cauchy 
problem
\begin{equation*}
\frac{dF_t}{dt}=LF_t,\qquad F_t\!\bigm|_{t=0}=F_0.
\end{equation*}
However, from the technical point of view, to show that $L$ is the generator of 
a semigroup on some reasonable space of functions defined on $\Gamma$ seems to 
be often a difficult question.

In applications, the properties of the evolution of the system through its 
states, that is, probability measures on $\Gamma$, is a subject of interest. 
Informally, such a time evolution is given by the dual Kolmogorov equation, 
the so-called Fokker-Planck equation,
\begin{equation}
\frac{d\mu_t}{dt}=L^*\mu_t, \qquad
\mu_t\!\bigm|_{t=0}=\mu_0\label{FokkerPlanck},
\end{equation}
where $L^*$ is the dual operator of $L$. Technically, the use of definition 
\eqref{Eq2.16} allows an alternative approach to the study of 
\eqref{FokkerPlanck} through the corresponding correlation functions 
$k_t:=k_{\mu_t}$, $t\geq0$, provided they exist. This leads to the Cauchy 
problem
\begin{equation*}
\frac \partial {\partial t}k_t=\hat L^*k_t,\quad
{k_t}_{|t=0}=k_0,
\end{equation*}
where $k_0$ is the correlation function corresponding to the initial 
distribution $\mu_0$ and $\hat L^*$ is the dual operator of $\hat L:=K^{-1}LK$ 
in the sense
\begin{equation*}
\int_{\Gamma_0}d\lambda(\eta)\,(\hat LG)(\eta) k(\eta)=
\int_{\Gamma_0}d\lambda(\eta)\,G(\eta) (\hat L^*k)(\eta).
\end{equation*}
Through the representation \eqref{BF_via_cf}, this gives us a way to express 
the dynamics also in terms of the GF $B_t$ corresponding to $\mu_t$, i.e., 
informally,
\begin{align}
\frac \partial {\partial t}B_t(\theta ) &=\int_{\Gamma _0}d\lambda(\eta)\,e_\lambda(\theta ,\eta )\left( \frac \partial{\partial t}k _t(\eta )\right)=\int_{\Gamma _0}d\lambda(\eta)\, e_\lambda(\theta ,\eta )(\hat L^*k_t)(\eta )\nonumber\\
&=\int_{\Gamma _0}d\lambda(\eta)\,(\hat Le_\lambda(\theta))(\eta)k_t(\eta)=:(\tilde L B_t)(\theta).\label{obtevol}
\end{align}

Concerning the evolution equation
\begin{equation}
\frac{\partial B_t}{\partial t}=\tilde LB_t,\label{evolgf}
\end{equation}
we observe that from the previous construction follows that if a solution 
$B_t$, $t\geq 0$, exists for some GF as an initial condition, then one may 
expect that each $B_t$ is the GF corresponding to the state of 
the system at the time $t$. However, besides the existence problem, at this 
point it is opportune to underline that if a solution to (\ref{evolgf}) 
exists, a priori it does not have to be a GF (corresponding to some measure). 
This verification requests an additional analysis, see e.g.~\cite{KoKuOl02}, 
\cite{K00}. 

In most concrete applications, to find a solution to (\ref{evolgf}) on a 
Banach space seems to be often a difficult question. However, the problem may 
be simplified within the framework of scales of Banach spaces. We recall that 
a scale of Banach spaces is a one-parameter family of Banach spaces 
$\{\B_s: 0<s\leq s_{0}\}$ such that
\[
\B_{s''}\subseteq \B_{s'},\quad \|\cdot\|_{s'}\leq \|\cdot\|_{s''}
\]
for any pair $s'$, $s''$ such that $0<s'< s''\leq s_{0}$, where $\|\cdot\|_s$ 
denotes the norm in $\B_s$. As an example, it is clear from Definition 
\ref{Bs} that for any $\alpha_0>0$ the family 
$\{\mathcal{E}_\alpha: 0<\alpha\leq\alpha_0\}$ is a scale of Banach spaces.

Within this framework, one has the following existence and uniqueness result
(see e.g.~\cite{T68}). 

\begin{theorem}\label{Th1} On a scale of Banach spaces $\{\B_s: 0<s\leq s_0\}$ 
consider the initial value problem
\begin{equation}
\frac{du(t)}{dt}=Au(t),\quad u(0)=u_0\in
\mathbb{B}_{s_0}\label{V1}
\end{equation}
where, for each $s\in(0,s_0)$ fixed and for each pair $s', s''$ such that 
$s\leq s'<s''\leq s_0$, $A:\B_{s''}\to\B_{s'}$ is a linear mapping so that 
there is an $M>0$ such that for all $u\in\B_{s''}$
\[
\|Au\|_{s'}\leq\frac{M}{s''-s'}\|u\|_{s''}.
\]
Here $M$ is independent of $s',s''$ and $u$, however it might depend 
continuously on $s,s_0$.

Then, for each $s\in(0,s_0)$, there is a constant $\delta>0$ (which depends on 
$M$) such that there is a unique function 
$u:\bigl[0,\delta(s_0-s)\bigr)\rightarrow\mathbb{B}_s$ which is continuously
differentiable on $\bigl(0,\delta(s_0-s)\bigr)$ in $\mathbb{B}_s$,
$Au\in\mathbb{B}_s$, and solves \eqref{V1} in the time-interval
$0\leq t<\delta(s_0-s)$.
\end{theorem}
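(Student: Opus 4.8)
The plan is to solve the initial value problem by the classical Ovsjannikov–Picard iteration scheme, constructing the solution as the limit of successive approximations and using the singular factor $1/(s''-s')$ in the bound on $A$ together with a careful loss-of-domain accounting to show that the iterates converge on a time interval that shrinks linearly in $s_0 - s$.

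Concretely, I would fix $s \in (0, s_0)$ and define the Picard iterates
\begin{equation*}
u_0(t) \equiv u_0, \qquad u_{n+1}(t) := u_0 + \int_0^t A u_n(\tau)\, d\tau, \quad n \geq 0.
\end{equation*}
The difficulty is that $A$ maps $\B_{s''}$ only into the strictly larger space $\B_{s'}$, so each application of $A$ costs a little domain; one cannot work in a single Banach space. The standard device is to interpolate a whole family of intermediate scales between $s$ and $s_0$. For the $n$-th iterate I would introduce, for each fixed target index, a partition of the interval $[s, s_0]$ into $n$ equal pieces, applying $A$ on each successive sublevel so that the $k$-th application of $A$ uses the gap $s''-s' = (s_0-s)/n$. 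The key estimate, proved by induction on $n$, is
\begin{equation*}
\|u_{n+1}(t) - u_n(t)\|_{s} \leq \|A u_0\|_{s_1} \, \frac{(Mn\,t)^n}{n!\,(s_0-s)^n},
\end{equation*}
or a variant thereof, where each integration in $\tau$ contributes a factor $t$ and each application of $A$ contributes $M/((s_0-s)/n) = Mn/(s_0-s)$. I expect this inductive bookkeeping — tracking exactly how the intermediate indices are chosen at each level so that the telescoping gaps sum correctly while the operator norm bound is applied with the right denominator — to be the main obstacle, since it is here that the linear-in-$(s_0-s)$ radius of convergence is extracted.

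Summing the resulting series, I would invoke the elementary bound $n^n/n! \leq e^n$ to obtain
\begin{equation*}
\sum_{n=0}^\infty \|u_{n+1}(t) - u_n(t)\|_s \leq \|A u_0\|_{s_1} \sum_{n=0}^\infty \left(\frac{M e\, t}{s_0 - s}\right)^n,
\end{equation*}
which converges precisely when $t < (s_0-s)/(Me)$, so that $\delta := 1/(Me)$ serves as the claimed constant depending only on $M$. Convergence of the partial sums in the norm $\|\cdot\|_s$ gives a limit $u(t) \in \B_s$ on the interval $[0, \delta(s_0-s))$, uniformly on compact subintervals, which I would then identify as the solution: passing to the limit in the integral relation shows $u(t) = u_0 + \int_0^t A u(\tau)\, d\tau$, whence $u$ is continuously differentiable in $\B_s$ on the open interval with $du/dt = Au$, and the required membership $Au \in \B_s$ follows by working at a slightly smaller index and reabsorbing the domain loss.

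Finally, for uniqueness I would suppose $v$ is a second solution on the same interval, form the difference $w := u - v$, which satisfies $w(0) = 0$ and $w(t) = \int_0^t A w(\tau)\, d\tau$, and run the identical iterated-scale estimate on $w$. Because $w(0) = 0$, the analogous induction yields $\|w(t)\|_s \leq (Met/(s_0-s))^n \cdot C$ for every $n$, and letting $n \to \infty$ on the subinterval where $Met/(s_0-s) < 1$ forces $w \equiv 0$ there; a continuation argument along the interval then gives uniqueness throughout $[0, \delta(s_0-s))$.
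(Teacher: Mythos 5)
Your existence construction is essentially identical to the paper's own proof: since $A$ is linear and time-independent, the Picard differences are exactly $u_{n+1}(t)-u_n(t)=\frac{t^{n+1}}{(n+1)!}A^{n+1}u_0$, so your telescoping sum is just the paper's series $u_0+\sum_{m\geq1}\frac{t^m}{m!}A^mu_0$; the equal-partition bound $\|A^mu_0\|_s\leq\bigl(\frac{mM}{s_0-s}\bigr)^m\|u_0\|_{s_0}$, the inequality $m^m/m!\leq e^m$, and $\delta=1/(eM)$ all coincide, and your remark about ``working at a slightly smaller index'' to obtain $Au\in\B_s$ is the paper's intermediate-index argument, which the paper makes precise via the assumed continuity of $M$ in $s$. (One bookkeeping slip: with the prefactor $\|Au_0\|_{s_1}$ the remaining $n$ applications of $A$ can only exploit the gap $s_1-s$, so the denominator should be $(s_1-s)^n$ rather than $(s_0-s)^n$; this is avoided by estimating $A^{n+1}u_0$ directly from $\|u_0\|_{s_0}$ with an $(n+1)$-fold partition of $[s,s_0]$, as the paper does.)

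The genuine gap is in your uniqueness argument --- the one part the paper does not prove itself, deferring instead to Treves. Your bound $\|w(t)\|_s\leq(Met/(s_0-s))^n\,C$ can only come from $\|A^nw(\tau)\|_s\leq\bigl(\frac{nM}{s_0-s}\bigr)^n\|w(\tau)\|_{s_0}$, so the constant $C$ must dominate $\sup_\tau\|w(\tau)\|_{s_0}$. But a competing solution $v$ is only assumed to take values in $\B_s$ with $Av\in\B_s$; nothing places it (or $w=u-v$) in $\B_{s_0}$, and in general even the constructed solution does not remain in $\B_{s_0}$ for $t>0$. The existence iteration works precisely because $A^m$ is always applied to the fixed vector $u_0\in\B_{s_0}$; it cannot be ``run identically'' on $w$, so the induction you describe cannot be started. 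The standard repair is to iterate \emph{downward} in the scale: $w$ is continuous in $\B_s$, so $C_t:=\sup_{0\leq\tau\leq t}\|w(\tau)\|_s<\infty$, and for any $s'<s$ the hypothesis (applied with base point $s'$ and constant $M'$) together with a partition of $[s',s]$ gives $\|A^nw(\tau)\|_{s'}\leq\bigl(\frac{nM'}{s-s'}\bigr)^n\|w(\tau)\|_s$; iterating the integral equation then yields $\|w(t)\|_{s'}\leq\bigl(\frac{eM't}{s-s'}\bigr)^nC_t\to0$ for $t<(s-s')/(eM')$, hence $w(t)=0$ (the inclusion $\B_s\subseteq\B_{s'}$ being injective), and your continuation step --- restarting from the value $0\in\B_{s_0}$ --- then does propagate $w\equiv0$ over all of $[0,\delta(s_0-s))$. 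With that reversal of direction in the scale, your sketch becomes a correct proof; without it, the key estimate is unavailable.
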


In Appendix we present a sketch of the proof of Theorem \ref{Th1}, which will 
be used to prove Theorem \ref{Thconv} below.

\section{The Glauber dynamics}\label{Subsection3.1}

The Glauber dynamics is an example of a birth-and-death model where, in this
special case, particles appear and disappear according to a death rate
identically equal to 1 and to a birth rate depending on the interaction
between particles. More precisely, let $\phi:\R^d\to\R\cup\{+\infty\}$ be a 
pair potential, that is, a $\mathcal{B}(\R^d)$-measurable function such that 
$\phi(-x)=\phi(x)\in \R$ for all $x\in \R^d\setminus\{0\}$, which we will 
assume to be non-negative and integrable. Given a configuration $\gamma\in\Gamma$, the
birth rate of a new particle at a site $x\in\R^d\setminus\gamma$ is given
by $\exp(-E(x,\gamma))$, where $E(x,\gamma)$ is a relative energy of
interaction between a particle located at $x$ and the configuration $\gamma$
defined by
\[
E(x,\gamma ):=\sum_{y\in \gamma }\phi (x-y)\in[0,+\infty].
\]
Informally, in terms of Markov generators, this means that the behavior of 
such an infinite particle system is described by
\begin{align}
(LF)(\gamma):=&\sum_{x\in \gamma}\big(F(\gamma\setminus\{x\}) - F(\gamma)\big)\nonumber
\\&+ z\int_{\R^d} dx\,e^{-E(x,\gamma)} \big(F(\gamma\cup\{x\}) - F(\gamma)\big),\label{Vl41}
\end{align}
where $z>0$ is an activity parameter (for more details see 
e.g.~\cite{FKO05,KoKtZh06}). As a consequence of Subsection 
\ref{Subsection2.3}, this implies that the operator $\tilde L$ defined in 
(\ref{obtevol}) is given cf.~\cite{FKO05} by
\begin{equation}\label{LtildeGL}
(\tilde LB)(\theta)= -\int_{\R^d} dx\,\theta(x)\Bigl( \vd
B(\theta;x) - zB\left(\theta e^{-\phi (x -\cdot )}+e^{-\phi (x
-\cdot )}-1\right)\Bigr).
\end{equation}

\begin{theorem}\label{Th13} Given an $\alpha_0>0$, let 
$B_0\in\mathcal{E}_{\alpha_0}$. For each $\alpha\in(0,\alpha_0)$ there is a 
$T>0$ (which depends on $\alpha,\alpha_0$) such that there is a 
unique solution $B_t$, $t\in[0,T)$, to the initial value problem 
$\dfrac{\partial B_t}{\partial t}=\tilde LB_t$, ${B_t}_{|t=0}= B_0$ in the 
space $\mathcal{E}_\alpha$. 
\end{theorem}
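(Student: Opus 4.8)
The plan is to apply the abstract Ovsjannikov-type existence and uniqueness result, Theorem \ref{Th1}, to the evolution equation \eqref{evolgf} with the explicit generator $\tilde L$ given in \eqref{LtildeGL}, using the scale of Banach spaces $\{\mathcal{E}_\alpha : 0<\alpha\leq\alpha_0\}$ from Definition \ref{Bs}. Since Theorem \ref{Th1} already packages the Picard iteration and the extraction of the local solution, the only genuine work is to verify its hypothesis: namely that for any pair $\alpha',\alpha''$ with $\alpha\leq\alpha'<\alpha''\leq\alpha_0$, the operator $\tilde L$ maps $\mathcal{E}_{\alpha''}$ into $\mathcal{E}_{\alpha'}$ with the bound $\|\tilde L B\|_{\alpha'}\leq \frac{M}{\alpha''-\alpha'}\|B\|_{\alpha''}$ for some constant $M$ independent of $\alpha',\alpha''$ and $B$. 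Once this is established, Theorem \ref{Th1} delivers exactly the asserted local solution on $[0,T)$ in $\mathcal{E}_\alpha$ with $T$ depending on $\alpha,\alpha_0$.

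To verify the bound, I would estimate the two terms of $\tilde L$ separately. For the derivative term, I would use estimate \eqref{2011!} of Theorem \ref{9Prop9.1.1}: for fixed $\theta$, the variational derivative $\vd B(\theta;\cdot)$ is controlled in $L^\infty$ by $\frac{1}{r}\sup_{\|\theta'\|_{L^1}\leq r}|B(\theta+\theta')|$, and by definition of the norm $\|\cdot\|_{\alpha''}$ the supremum is bounded by $\|B\|_{\alpha''}\exp\bigl(\frac{1}{\alpha''}(\|\theta\|_{L^1}+r)\bigr)$. The factor $\int_{\R^d}dx\,|\theta(x)|$ produced by the integration against $\theta(x)$ gives a factor $\|\theta\|_{L^1}$; the standard trick is to absorb this growing prefactor into the gain of exponential weight coming from the difference $\frac{1}{\alpha'}-\frac{1}{\alpha''}=\frac{\alpha''-\alpha'}{\alpha'\alpha''}$, optimizing over $r$ (or using $\sup_{t\geq0}t\,e^{-ct}=\frac{1}{ec}$) so that the linear prefactor is traded against the gap $\alpha''-\alpha'$ appearing in the denominator. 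For the birth term, I would note that the argument $\theta e^{-\phi(x-\cdot)}+e^{-\phi(x-\cdot)}-1$ has $L^1$-norm controlled, using $0\leq e^{-\phi}\leq 1$ and non-negativity of $\phi$, by $\|\theta\|_{L^1}+\|1-e^{-\phi}\|_{L^1}$, where $\|1-e^{-\phi}\|_{L^1}\leq\|\phi\|_{L^1}<\infty$ since $\phi$ is assumed non-negative and integrable; plugging this into the definition of $\|B\|_{\alpha''}$ controls $|B(\cdots)|$ by $\|B\|_{\alpha''}\exp\bigl(\frac{1}{\alpha''}(\|\theta\|_{L^1}+\|\phi\|_{L^1})\bigr)$, and again the linear prefactor $\|\theta\|_{L^1}$ from the outer integral is absorbed against the weight gap.

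The main obstacle, and the point requiring care rather than difficulty, is precisely this absorption of the linear factor $\|\theta\|_{L^1}$: after the pointwise bounds the estimate for $|(\tilde L B)(\theta)|$ will look like $\|B\|_{\alpha''}\,\|\theta\|_{L^1}\,e^{\frac{1}{\alpha''}\|\theta\|_{L^1}}$ times a constant, and to land in $\mathcal{E}_{\alpha'}$ I must multiply by $e^{-\frac{1}{\alpha'}\|\theta\|_{L^1}}$ and take the supremum over $\theta$. The crucial inequality is then $\sup_{\theta}\|\theta\|_{L^1}\,e^{-(\frac{1}{\alpha'}-\frac{1}{\alpha''})\|\theta\|_{L^1}}=\frac{1}{e}\bigl(\frac{1}{\alpha'}-\frac{1}{\alpha''}\bigr)^{-1}=\frac{1}{e}\,\frac{\alpha'\alpha''}{\alpha''-\alpha'}\leq\frac{\alpha_0^2}{e(\alpha''-\alpha')}$, which yields exactly the $\frac{1}{\alpha''-\alpha'}$ singularity Theorem \ref{Th1} requires, with $M$ absorbing the constants $z$, $\alpha_0$, and $\|\phi\|_{L^1}$. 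I would finally note that $\tilde L B$ is again entire (it is built from $B$ by affine substitutions and an integration, which preserve entirety and bounded type), so the image genuinely lies in $\mathcal{E}_{\alpha'}$; with all hypotheses of Theorem \ref{Th1} met, the conclusion follows verbatim with $T=\delta(\alpha_0-\alpha)$.
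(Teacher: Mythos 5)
Your proposal is correct and takes essentially the same route as the paper: the paper reduces Theorem \ref{Th13} to Theorem \ref{Th1} via the norm estimate of Proposition \ref{Proposition2}, which it proves by splitting $\tilde L$ into the derivative term (Lemma \ref{Lemma1}, using \eqref{2011!} and optimizing over $r$) and the birth term (Lemma \ref{Lemma2} with $\varphi=e^{-\phi(x-\cdot)}$, $\psi=e^{-\phi(x-\cdot)}-1$, so $c_0=1$, $c_1=\|\phi\|_{L^1}$), absorbing the prefactor $\|\theta\|_{L^1}$ precisely by your inequality $\sup_{x\geq 0}x e^{-mx}=\frac{1}{em}$. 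The only cosmetic difference is that the paper packages these two estimates as separate lemmas, and your constant $M$ also depends (continuously) on $\alpha$ through $e^{\|\phi\|_{L^1}/\alpha}$, which Theorem \ref{Th1} explicitly permits.
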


This theorem follows as a concrete application of Theorem \ref{Th1} and the 
following estimate of norms.

\begin{proposition}\label{Proposition2}
Let $\alpha_0>\alpha>0$ be given. If $B\in\mathcal{E}_{\alpha''}$ for some 
$\alpha''\in \left(s,s_0\right]$, then $\tilde LB\in\mathcal{E}_{\alpha'}$ for 
all $\alpha\leq\alpha'<\alpha''$, and we have 
\[
\|\tilde LB\|_{\alpha'}\leq \frac{\alpha_{0}}{\alpha''-\alpha'}
\left(1+z\alpha_0 e^{\frac{\|\phi\|_{L^1}}{\alpha}-1}\right)\|B\|_{\alpha''}.
\]
\end{proposition}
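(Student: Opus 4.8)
The plan is to bound the two terms in the expression \eqref{LtildeGL} separately, using the $\alpha$-norm of Definition \ref{Bs} together with the variational derivative estimate \eqref{2011!} from Theorem \ref{9Prop9.1.1}. Fix $B\in\mathcal{E}_{\alpha''}$ and $\theta\in L^1$. For the first term I would estimate $|\vd B(\theta;x)|$ pointwise. Applying \eqref{2011!} at the base point $\theta$ with radius $r$ gives $\|\vd B(\theta;\cdot)\|_{L^\infty}\leq \frac1r\sup_{\|\theta'\|_{L^1}\leq r}|B(\theta+\theta')|\leq \frac1r\|B\|_{\alpha''}\,e^{\frac{1}{\alpha''}(\|\theta\|_{L^1}+r)}$, where the last inequality just unpacks the definition of $\|\cdot\|_{\alpha''}$. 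Optimising in $r$ (the choice $r=\alpha''$ is natural, since then $e^{r/\alpha''}=e$) yields a clean bound of the form $\|\vd B(\theta;\cdot)\|_{L^\infty}\leq \frac{e}{\alpha''}\|B\|_{\alpha''}\,e^{\frac{1}{\alpha''}\|\theta\|_{L^1}}$. Then $\bigl|\int_{\R^d}dx\,\theta(x)\vd B(\theta;x)\bigr|\leq \|\theta\|_{L^1}\,\|\vd B(\theta;\cdot)\|_{L^\infty}$, and the extra factor $\|\theta\|_{L^1}$ is absorbed into the gap $\alpha''-\alpha'$ via the elementary inequality $s\,e^{-cs}\leq \frac{1}{ec}$ with $c=\frac1{\alpha'}-\frac1{\alpha''}$, after writing the target weight $e^{-\|\theta\|_{L^1}/\alpha'}$ against the available weight $e^{\|\theta\|_{L^1}/\alpha''}$.

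For the second term the key observation is that the argument $\theta e^{-\phi(x-\cdot)}+e^{-\phi(x-\cdot)}-1$ must again be controlled in $L^1$-norm so that the bounded-type bound for $B$ applies. Since $0\leq e^{-\phi}\leq 1$ (because $\phi\geq0$), I would estimate the $L^1$-norm of this argument as $\|\theta e^{-\phi(x-\cdot)}\|_{L^1}+\|e^{-\phi(x-\cdot)}-1\|_{L^1}\leq \|\theta\|_{L^1}+\|1-e^{-\phi}\|_{L^1}$, and then use $1-e^{-t}\leq t$ for $t\geq0$ to bound $\|1-e^{-\phi}\|_{L^1}\leq\|\phi\|_{L^1}$. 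Hence $|B(\theta e^{-\phi(x-\cdot)}+e^{-\phi(x-\cdot)}-1)|\leq \|B\|_{\alpha''}\,e^{\frac1{\alpha''}(\|\theta\|_{L^1}+\|\phi\|_{L^1})}$ uniformly in $x$. Multiplying by $z|\theta(x)|$ and integrating over $x$ contributes another $\|\theta\|_{L^1}$ factor, which is handled by the same $s\,e^{-cs}$ trick as above.

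Combining the two contributions and factoring out the common weight $e^{-\|\theta\|_{L^1}/\alpha'}$ and the common gap-factor $\frac{1}{\alpha''-\alpha'}$ should produce the stated constant $\frac{\alpha_0}{\alpha''-\alpha'}\bigl(1+z\alpha_0 e^{\|\phi\|_{L^1}/\alpha-1}\bigr)$; here I expect the bookkeeping to rely on the inequalities $\alpha''\leq\alpha_0$, $\frac1{\alpha''}\leq\frac1\alpha$ and $\frac1{ec}\leq \frac{\alpha'\alpha''}{\alpha''-\alpha'}\leq\frac{\alpha_0^2}{\alpha''-\alpha'}$ (dropping the higher-order gap dependence into the single allowed $\frac{1}{\alpha''-\alpha'}$ factor, as required by the hypotheses of Theorem \ref{Th1}). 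Finally, one must check that $\tilde LB$ is itself entire on $L^1$ so that it genuinely lies in $\mathcal{E}_{\alpha'}$ and not merely satisfies the growth bound; this follows since $\theta\mapsto\vd B(\theta;\cdot)$ and the composition $\theta\mapsto B(\theta e^{-\phi(x-\cdot)}+\cdots)$ inherit entireness from $B$, and entireness is preserved under the integration in $x$ by dominated convergence using the uniform bounds just established.

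The step I expect to be the main obstacle is the uniform-in-$x$ control of the composed argument $\theta e^{-\phi(x-\cdot)}+e^{-\phi(x-\cdot)}-1$ in the second term: one needs its $L^1$-norm bounded \emph{independently} of the translation parameter $x$, which is exactly what the translation-invariance of Lebesgue measure together with $\|1-e^{-\phi}\|_{L^1}\leq\|\phi\|_{L^1}<\infty$ delivers, but verifying that this bound feeds correctly into the bounded-type estimate for $B$ (and hence that the $x$-integral converges) is where the integrability hypothesis on $\phi$ is essential and must be used carefully.
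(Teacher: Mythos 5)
Your proposal is correct and follows essentially the same route as the paper's proof: the paper simply packages your two inline estimates as Lemma \ref{Lemma1} (for the term $\int_{\R^d}dx\,\theta(x)\vd B(\theta;x)$, via \eqref{2011!} with the optimal choice $r=\alpha''$) and Lemma \ref{Lemma2} (for the composed term, applied with $\varphi=e^{-\phi}$, $\psi=e^{-\phi}-1$, $c_0=1$, $c_1=\|\phi\|_{L^1}$), using exactly the same ingredients you identify: $0\leq e^{-\phi}\leq 1$, $1-e^{-\phi}\leq\phi$, translation invariance for the uniform-in-$x$ bound, and the absorption of the extra $\|\theta\|_{L^1}$ factor through $xe^{-mx}\leq\frac{1}{em}$. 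The final bookkeeping ($\alpha',\alpha''\leq\alpha_0$ and $\frac{1}{\alpha''}\leq\frac{1}{\alpha}$) also matches the paper's derivation of the stated constant.
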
  

To prove this result, the next two lemmata show to be useful.

\begin{lemma}\label{Lemma1} Given an $\alpha>0$, for all 
$B\in\mathcal{E}_\alpha$ let
\[
(L_0 B)(\theta):=\int_{\R^d}dx\,\theta(x)\vd B(\theta; x),\quad
\theta\in L^1.
\]
Then, for all $\alpha'<\alpha$, we have $L_0B\in\mathcal{E}_{\alpha'}$ and,
moreover, the following estimate of norms holds:
\[
\|L_0B\|_{\alpha'}\leq \frac{\alpha'}{\alpha-\alpha'}\|B\|_{\alpha}.
\]
\end{lemma}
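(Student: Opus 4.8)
The plan is to estimate the $L^\infty$-norm of the variational derivative $\vd(L_0B)(\theta;\cdot)$ in terms of the defining norm $\|B\|_\alpha$ and then convert this back into a bound on $\|L_0B\|_{\alpha'}$. The starting observation is that the operator $L_0$ acts as $(L_0B)(\theta)=\int_{\R^d}dx\,\theta(x)\vd B(\theta;x)$, which is linear in the integrand $\theta$ and simultaneously involves the first variational derivative of $B$. The natural route is to recognize $L_0$ as a scaling-type operator: in fact, one expects $(L_0B)(\theta)$ to be closely related to $\frac{d}{dz}B(z\theta)\big|_{z=1}$, since differentiating $B(z\theta)$ in $z$ pulls down exactly one factor $\theta(x)$ against the first derivative $\vd B(z\theta;x)$.

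First I would make this identification precise. By the definition of the first differential in Theorem \ref{9Prop9.1.1}, for fixed $\theta\in L^1$ we have $\frac{d}{dz}B(z\theta)=dB(z\theta;\theta)=\int_{\R^d}dx\,\theta(x)\vd B(z\theta;x)$, so that evaluating at $z=1$ gives $(L_0B)(\theta)=\frac{d}{dz}B(z\theta)\big|_{z=1}$. Since $B$ is entire, $z\mapsto B(z\theta)$ is an entire function of $z\in\C$, and $(L_0B)(\theta)$ is therefore entire in $\theta$ as well (local boundedness will follow from the forthcoming estimate). The key step is then to represent this $z$-derivative via a Cauchy integral over a circle in the complex plane, writing
\[
(L_0B)(\theta)=\frac{d}{dz}B(z\theta)\Big|_{z=1}=\frac{1}{2\pi i}\oint_{|z-1|=\rho}\frac{B(z\theta)}{(z-1)^2}\,dz
\]
for a suitable radius $\rho>0$ to be optimized.

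Next I would estimate. On the circle $|z-1|=\rho$ we have $\|z\theta\|_{L^1}\le(1+\rho)\|\theta\|_{L^1}$, so using $B\in\mathcal{E}_\alpha$, $|B(z\theta)|\le\|B\|_\alpha e^{\frac{1}{\alpha}(1+\rho)\|\theta\|_{L^1}}$. Bounding the Cauchy integral by length times sup gives $|(L_0B)(\theta)|\le\frac{1}{\rho}\|B\|_\alpha\exp\!\bigl(\frac{1+\rho}{\alpha}\|\theta\|_{L^1}\bigr)$. To land in $\mathcal{E}_{\alpha'}$ I need the exponential rate $\frac{1+\rho}{\alpha}$ to match $\frac{1}{\alpha'}$, which forces $1+\rho=\frac{\alpha}{\alpha'}$, i.e.\ $\rho=\frac{\alpha-\alpha'}{\alpha'}>0$ precisely because $\alpha'<\alpha$. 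Substituting, the prefactor becomes $\frac{1}{\rho}=\frac{\alpha'}{\alpha-\alpha'}$, yielding $|(L_0B)(\theta)|e^{-\frac{1}{\alpha'}\|\theta\|_{L^1}}\le\frac{\alpha'}{\alpha-\alpha'}\|B\|_\alpha$, and taking the supremum over $\theta\in L^1$ gives exactly $\|L_0B\|_{\alpha'}\le\frac{\alpha'}{\alpha-\alpha'}\|B\|_\alpha$, as claimed.

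The main obstacle, and the point requiring the most care, is the justification of interchanging differentiation and integration to establish the identity $(L_0B)(\theta)=dB(\theta;\theta)$ rigorously and to confirm that the resulting functional is genuinely entire (not merely Gateaux-differentiable along each ray); this rests on the entirety of $B$ and the representation of the differential by the kernel $\vd B$ from Theorem \ref{9Prop9.1.1}. The Cauchy-integral estimate itself is then routine. An alternative, more elementary argument avoiding complex analysis would bound $\|\vd B(\theta;\cdot)\|_{L^\infty}$ directly via \eqref{2011!} with an optimized radius $r$, but the Cauchy-integral approach is cleaner since it tracks the exponential weight automatically and makes the optimal choice of $\rho$ transparent.
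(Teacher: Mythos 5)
Your proof is correct and reaches the paper's exact constant, but by a genuinely different route. The paper stays with the kernel estimate \eqref{2011!}: it bounds $|(L_0B)(\theta)|\leq \frac{\|\theta\|_{L^1}}{r}\sup_{\|\theta_0\|_{L^1}\leq r}|B(\theta+\theta_0)|$ over an $L^1$-ball of \emph{fixed} radius $r$, which produces the mismatched weight $e^{\|\theta\|_{L^1}/\alpha}$ together with a stray factor $\|\theta\|_{L^1}$; that factor is then absorbed by the elementary inequality $xe^{-mx}\leq\frac{1}{em}$ after taking the supremum over $\theta$, and finally $r$ is optimized (at $r=\alpha$) to recover $\frac{\alpha'}{\alpha-\alpha'}$. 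You instead exploit the dilation structure of $L_0$, writing $(L_0B)(\theta)=\frac{d}{dz}B(z\theta)\big|_{z=1}$ and applying the one-variable Cauchy formula on the circle $|z-1|=\rho$ with $\rho=\frac{\alpha-\alpha'}{\alpha'}$; since the perturbations $(z-1)\theta$ on that circle have $L^1$-norm $\rho\|\theta\|_{L^1}$, i.e.\ proportional to $\|\theta\|_{L^1}$ rather than of fixed size, the weight comes out as exactly $e^{\|\theta\|_{L^1}/\alpha'}$ and no supremum trick is needed: the whole optimization sits in the choice of $\rho$. What the paper's method buys is reusability: the same two-step scheme (fixed-radius bound, then $xe^{-mx}\leq\frac{1}{em}$) is recycled verbatim for the operator $L_1$ in Lemma \ref{Lemma2} and again in Proposition \ref{Prop3}, where there is no dilation structure to exploit; your ray argument is special to $L_0$. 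Two minor remarks: the identity $(L_0B)(\theta)=dB(\theta;\theta)$ is immediate from the kernel representation in Theorem \ref{9Prop9.1.1} with $n=1$ and $\theta_0=\theta_1=\theta$, so no interchange of differentiation and integration has to be justified and the ``main obstacle'' you flag is not actually an obstacle; and your claim that $L_0B$ is entire (Gateaux analyticity in every direction plus local boundedness, not merely analyticity along the rays $z\theta$) is left at the same level of brevity as in the paper, which likewise disposes of it by invoking Theorem \ref{9Prop9.1.1}.
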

\begin{proof}
First we observe that an application of Theorem \ref{9Prop9.1.1} shows that 
$L_0B$ is an entire functional on $L^1$ and, in addition, that for all $r>0$ 
and all $\theta\in L^1$, 
\[
\left|(L_0B)(\theta)\right|\leq\|\theta\|_{L^1}\left\|\vd
B(\theta;\cdot)\right\|_{L^\infty(\R^d)}
\leq\frac{\|\theta\|_{L^1}}{r}\sup_{\|\theta_0\|_{L^1}\leq
r}\left|B(\theta+\theta_0)\right|,
\]
where, for all $\theta_0\in L^1$ such that $\|\theta_0\|_{L^1}\leq r$,
\[
\left|B(\theta+\theta_0)\right|\leq \|B\|_\alpha e^{\frac{\|\theta\|_{L^1}}{\alpha}+\frac{r}{\alpha}}.
\]
Thus,
\[
\|L_0B\|_{\alpha'}=\sup_{\theta\in L^1}\left(e^{-\frac{1}{\alpha'}\|\theta\|_{L^1}}|(L_0B)(\theta)|\right)\leq\frac{e^{\frac{r}{\alpha}}}{r}\|B\|_\alpha\sup_{\theta\in L^1}\left(e^{-\left(\frac{1}{\alpha'}-\frac{1}{\alpha}\right)\|\theta\|_{L^1}}\|\theta\|_{L^1}\right),
\]
where the latter supremum is finite provided 
$\frac{1}{\alpha'}-\frac{1}{\alpha}>0$. In such a situation, the use of 
inequality $xe^{-mx}\leq\frac{1}{em}$, $x\geq0$, $m>0$ leads for each $r>0$ to
\begin{equation*}
\|L_0B\|_{\alpha'}\leq\frac{e^{\frac{r}{\alpha}}}{r}\frac{\alpha\alpha'}{e(\alpha-\alpha')}\|B\|_\alpha.
\end{equation*}
The required estimate of norms follows by minimizing the expression 
$\frac{e^{\frac{r}{\alpha}}}{r}\frac{\alpha\alpha'}{e(\alpha-\alpha')}$ in the 
parameter $r$, that is, $r=\alpha$.
\end{proof}

\begin{lemma}\label{Lemma2} Let $\varphi,\psi:\R^d\times\R^d\to\R$ be such that,
for a.a.~$x\in\R^d$, $\varphi(x,\cdot)\in L^\infty:=L^\infty(\R^d)$,
$\psi(x,\cdot)\in L^1$ and $\|\varphi(x,\cdot)\|_{L^\infty}\leq c_0$,
$\|\psi(x,\cdot)\|_{L^1}\leq c_1$ for some constants $c_0,c_1>0$ independent
of $x$. For each $\alpha>0$ and all $B\in\mathcal{E}_\alpha$ let
\[
(L_1B)(\theta):=\int_{\R^d}dx\,\theta(x)B(\varphi(x, \cdot)\theta+\psi(x,\cdot)),\quad\theta\in L^1.
\]
Then, for all $\alpha'>0$ such that $c_0\alpha'<\alpha$, we have
$L_1B\in\mathcal{E}_{\alpha'}$
and
\[
\|L_1B\|_{\alpha'}\leq \frac{\alpha\alpha'}{\alpha-c_0\alpha'}e^{\frac{c_1}{\alpha}-1}\|B\|_{\alpha}.
\]
\end{lemma}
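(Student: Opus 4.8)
The plan is to mirror the structure of the proof of Lemma~\ref{Lemma1}, adapting each step to account for the argument shift $\theta\mapsto\varphi(x,\cdot)\theta+\psi(x,\cdot)$ appearing inside $B$ rather than a variational derivative. First I would verify that $L_1B$ is an entire functional on $L^1$: for fixed $\theta_0,\theta\in L^1$ the map $z\mapsto(L_1B)(\theta_0+z\theta)$ is entire because the integrand $B(\varphi(x,\cdot)(\theta_0+z\theta)+\psi(x,\cdot))$ is entire in $z$ for a.a.~$x$ (as $B$ is entire and the inner argument is affine in $z$), and local boundedness follows from the pointwise bound established below together with dominated convergence. The integrability of the map $x\mapsto\theta(x)B(\varphi(x,\cdot)\theta+\psi(x,\cdot))$ needs the observation that $\varphi(x,\cdot)\theta\in L^1$ with $\|\varphi(x,\cdot)\theta\|_{L^1}\le c_0\|\theta\|_{L^1}$ since $\|\varphi(x,\cdot)\|_{L^\infty}\le c_0$, and $\psi(x,\cdot)\in L^1$ with $\|\psi(x,\cdot)\|_{L^1}\le c_1$, so the argument of $B$ indeed lies in $L^1$.

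Next I would produce the pointwise estimate. For a.a.~$x\in\R^d$, using $B\in\mathcal{E}_\alpha$ and Definition~\ref{Bs},
\[
\left|B(\varphi(x,\cdot)\theta+\psi(x,\cdot))\right|
\le\|B\|_\alpha\,
\exp\!\left(\tfrac{1}{\alpha}\|\varphi(x,\cdot)\theta+\psi(x,\cdot)\|_{L^1}\right)
\le\|B\|_\alpha\,
e^{\frac{c_1}{\alpha}}\,e^{\frac{c_0}{\alpha}\|\theta\|_{L^1}},
\]
where I used the triangle inequality in $L^1$ and the two uniform bounds on $\varphi$ and $\psi$. Multiplying by $|\theta(x)|$ and integrating gives
\[
\left|(L_1B)(\theta)\right|
\le\|\theta\|_{L^1}\,\|B\|_\alpha\,
e^{\frac{c_1}{\alpha}}\,e^{\frac{c_0}{\alpha}\|\theta\|_{L^1}}.
\]

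To extract the $\mathcal{E}_{\alpha'}$-norm I would multiply by $e^{-\frac{1}{\alpha'}\|\theta\|_{L^1}}$ and take the supremum over $\theta\in L^1$, which reduces to controlling
\[
\sup_{\theta\in L^1}\left(\|\theta\|_{L^1}\,
e^{-\left(\frac{1}{\alpha'}-\frac{c_0}{\alpha}\right)\|\theta\|_{L^1}}\right).
\]
This supremum is finite precisely when the exponent is negative, i.e.~when $\frac{1}{\alpha'}-\frac{c_0}{\alpha}>0$, which is exactly the hypothesis $c_0\alpha'<\alpha$. Applying the elementary inequality $xe^{-mx}\le\frac{1}{em}$ (for $x\ge0$, $m>0$) with $m=\frac{1}{\alpha'}-\frac{c_0}{\alpha}=\frac{\alpha-c_0\alpha'}{\alpha\alpha'}$ yields the supremum value $\frac{1}{e}\cdot\frac{\alpha\alpha'}{\alpha-c_0\alpha'}$, and collecting the constants produces
\[
\|L_1B\|_{\alpha'}\le\|B\|_\alpha\,e^{\frac{c_1}{\alpha}}\cdot\frac{1}{e}\cdot\frac{\alpha\alpha'}{\alpha-c_0\alpha'}
=\frac{\alpha\alpha'}{\alpha-c_0\alpha'}\,e^{\frac{c_1}{\alpha}-1}\,\|B\|_\alpha,
\]
which is the claimed bound. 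The main obstacle I anticipate is not the norm estimate itself, which is routine once the pointwise bound is in hand, but rather the careful justification that $L_1B$ is genuinely entire on $L^1$ in the sense of the excerpt's definition --- in particular ruling out measurability or integrability pathologies in the $x$-integral and confirming local boundedness uniformly on balls, so that the Taylor representation and hence membership in $\mathcal{E}_{\alpha'}$ are legitimate.
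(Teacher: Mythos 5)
Your proposal is correct and follows essentially the same route as the paper's own proof: the same pointwise bound $|B(\varphi(x,\cdot)\theta+\psi(x,\cdot))|\le\|B\|_\alpha e^{\frac{1}{\alpha}(c_0\|\theta\|_{L^1}+c_1)}$, the same reduction to $\sup_{\theta}\|\theta\|_{L^1}e^{-(\frac{1}{\alpha'}-\frac{c_0}{\alpha})\|\theta\|_{L^1}}$, and the same elementary inequality $xe^{-mx}\le\frac{1}{em}$, exactly as in Lemma \ref{Lemma1}. The only (harmless) difference is that you justify entireness of $L_1B$ by a direct dominated-convergence argument, whereas the paper simply invokes Theorem \ref{9Prop9.1.1}.
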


\begin{proof}
As before, it follows from Theorem \ref{9Prop9.1.1} that $L_1B$ is an entire 
functional on $L^1$. Hence, given a $B\in\mathcal{E}_\alpha$, for all 
$\theta\in L^1$ one has
\[
|B(\varphi(x,\cdot)\theta+\psi(x,\cdot))|\leq \|B\|_{\alpha}\,e^{\frac{1}{\alpha}\left(\|\varphi(x,\cdot)\theta\|_{L^1}+\|\psi(x,\cdot)\|_{L^1}\right)},
\]
and thus
\begin{eqnarray*}
\|L_1B\|_{\alpha'}&\leq&\sup_{\theta\in L^1}\left(e^{-\frac{1}{\alpha'}\|\theta\|_{L^1}}\int_{\R^d}dx\,\left|\theta(x)B(\varphi(x,\cdot)\theta+\psi(x,\cdot))\right|\right)\\
&\leq& e^{\frac{c_1}{\alpha}}\|B\|_{\alpha}\sup_{\theta\in L^1}\left(e^{-\left(\frac{1}{\alpha'}-\frac{c_0}{\alpha}\right)\|\theta\|_{L^1}}\|\theta\|_{L^1}\right).
\end{eqnarray*}
The proof follows as in the proof of Lemma \ref{Lemma1}.
\end{proof}

\begin{proof}[Proof of Proposition \ref{Proposition2}]
In Lemma \ref{Lemma2} replace $\varphi$ by $e^{-\phi}$ and
$\psi$ by $e^{-\phi}-1$. Due to the positiveness and integrability
properties of $\phi$ one has $e^{-\phi}\leq 1$ and
$|e^{-\phi}-1|=1-e^{-\phi}\leq \phi\in L^1$, ensuring the conditions
to apply Lemma \ref{Lemma2}. This together with Lemma \ref{Lemma1}
leads to the required result.
\end{proof}

\begin{remark}
 It follows from the proof of Theorem \ref{Th1} that for each $t\in(0,T)$
there is an $\alpha_t\in(\alpha,\alpha_0)$ such that
$B_t\in\mathcal{E}_\beta$ for all $\beta\in\left[\alpha,\alpha_t\right)$, 
cf.~\cite{FKKoz2011}.
\end{remark}

\begin{remark}
Concerning the initial conditions considered in Theorem \ref{Th13}, observe 
that, in particular, $B_0$ can be an entire GF $B_{\mu_0}$ on $L^1$ such that, 
for some constants $\alpha_0,C>0$, 
$|B_{\mu_0}(\theta)|\leq C\exp(\frac{\|\theta\|_{L^1}}{\alpha_0})$ for all 
$\theta\in L^1$. As we have mentioned before, in such a situation an 
additional analysis is required in order to guarantee that for each time $t$ 
the local solution $B_t$ given by Theorem \ref{Th13} is a GF. Such an analysis 
is outside of our goal, but it may be done using e.g.~\cite[Theorem 2.13]{GK2006}, 
which yields the existence of a proper $\mathcal{S}\subset\Gamma$ and a 
$\mathcal{S}$-valued process with 
sample paths in the Skorokhod space $D_\mathcal{S}(\left[0,+\infty\right))$ 
associated with $L$ defined in \eqref{Vl41}. This shows the existence of the time 
evolution $\mu_0\mapsto\mu_t$, where $\mu_t$ is the law of the 
$\mathcal{S}$-valued process, leading apart from existence problems to the time evolution 
$B_{\mu_0}\mapsto B_{\mu_t}$ of the corresponding GF. The latter will be a 
solution to the initial value problem \eqref{evolgf}, \eqref{LtildeGL} with 
${B_t}_{|t=0}= B_{\mu_0}$. By the uniqueness stated in Theorem \ref{Th13}, this 
implies that for each $t\in[0, T)$ we will have $B_t=B_{\mu_t}$, and thus 
$B_t$ is a GF.
\end{remark}

Theorem \ref{Th13} only ensures the existence of a local solution. However, 
under certain initial conditions, such a solution might be extend to a global 
one, that is, to a solution defined on the whole time interval 
$\left[0,+\infty\right)$, as follows. Assume that the initial condition $B_0$ 
is an entire GF on $L^1$. Then, by Proposition \ref{9Prop9.1.3}, $B_0$ can be 
written in terms of the corresponding correlation function $k_0$, 
\[
B_0(\theta)=\int_{\Gamma_0}d\lambda(\eta)\, e_\lambda (\theta, \eta)k_0(\eta),\quad \theta\in L^1.
\]
Assuming, in addition, that $k_0$ fulfills the Ruelle bound 
$k_0(\eta)\leq z^{|\eta|}$, $\eta\in\Gamma_0$, being $z$ the activity parameter 
appearing in definition (\ref{Vl41}), then, in terms of $B_0$, this leads to
$|B_0(\theta)|\leq e^{z\|\theta\|_{L^1}}$, $\theta\in L^1$, showing that 
$B_0\in\mathcal{E}_{1/z}$ and, moreover, $\|B_0\|_{\frac{1}{z}}\leq 1$. Thus, 
fixing an $\alpha\in\left(0,1/z\right)$, an application of 
Theorem~\ref{Th13} yields a solution $B_t$, 
$t\in\left[0,\delta(1/z-\alpha\right))$, to the initial value problem 
\eqref{evolgf}, \eqref{LtildeGL} with ${B_t}_{|t=t_0}= B_0$. Assume that each 
$B_t$ is an entire GF on $L^1$. As shown in \cite[Lemma 3.10]{FKKoz2011}, in 
this case the corresponding correlation function $k_t$ still fulfills the 
Ruelle bound with the same constant $z$, meaning that the local solution  
$B_t$, $t\in\left[0,\delta(1/z-\alpha\right))$, does not leave the initial 
Banach space $\mathcal{E}_{1/z}$. This allows us to consider any 
$t_0\in\left[0,\delta(1/z-\alpha)\right)$ sufficiently close to 
$\delta(1/z-\alpha)$ as an initial time and, as before, to study the 
initial value problem \eqref{evolgf}, \eqref{LtildeGL} with 
${B_t}_{|t=t_0}= B_{t_0}$ in the same scale of Banach spaces.  This will give a 
solution $B_t$ on the time-interval $\left[t_0,t_0+\delta(1/z-\alpha)\right)$. 
Assuming again that each $B_t$, $t\in\left[t_0,t_0+\delta(1/z-\alpha)\right)$, 
is an entire GF on $L^1$, naturally that $B_t\in\mathcal{E}_{1/z}$ with 
$\|B_t\|_{1/z}\leq 1$, for all $t\in\left[t_0,t_0+\delta(1/z-\alpha)\right)$. 
Therefore, one may repeat the above arguments. 

This argument iterated yields at the end a solution to the initial value 
problem $\frac{\partial B_t}{\partial t}=\tilde LB_t$, ${B_t}_{|t=0}= B_0$ 
defined on $\left[0,+\infty\right)$. Of course, by the uniqueness stated in 
Theorem \ref{Th13}, the global solution constructed in this way is necessarily 
unique. In this context, one may state the following result.

\begin{corollary}\label{bdd}
Given an entire GF $B_0$ on $L^1$ such that the corresponding correlation 
function $k_0$ fulfills the Ruelle bound $k_0(\eta)\leq z^{|\eta|}$, 
$\eta\in\Gamma_0$, for the activity parameter $z$ appearing in definition 
(\ref{Vl41}), the local solution to the initial value problem 
$\dfrac{\partial B_t}{\partial t}=\tilde LB_t$, ${B_t}_{|t=0}= B_0$ (given by 
Theorem \ref{Th13}) might be extended to a global solution which, for each time 
$t\geq 0$, is an entire GF on $L^1$. 
\end{corollary}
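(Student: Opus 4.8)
The plan is to formalize the bootstrap already sketched in the preceding paragraphs, by turning the Ruelle bound into an invariant of the flow and then iterating the local result of Theorem \ref{Th13} on a time step of fixed length. First I would translate the hypothesis on $k_0$ into a membership statement for $B_0$: combining the representation \eqref{BF_via_cf} with the bound $k_0(\eta)\leq z^{|\eta|}$ and the identity \eqref{meanLP}, one estimates
\[
|B_0(\theta)|\leq\int_{\Gamma_0}d\lambda(\eta)\,e_\lambda(z|\theta|,\eta)=\exp\left(z\|\theta\|_{L^1}\right),\quad\theta\in L^1,
\]
so that $B_0\in\mathcal{E}_{1/z}$ with $\|B_0\|_{1/z}\leq 1$. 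Fixing $\alpha\in(0,1/z)$ and applying Theorem \ref{Th13} with $\alpha_0=1/z$ then produces a unique local solution on $[0,\tau)$, where I abbreviate $\tau:=\delta(1/z-\alpha)$ for the existence time supplied by that theorem.

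The decisive observation is that $\tau$ depends only on the pair $(\alpha,1/z)$, and not on the particular initial datum, as long as the datum lies in the unit ball of $\mathcal{E}_{1/z}$. Hence, provided the solution never leaves this unit ball, the very same length $\tau$ is available as a fresh existence time at every restart. To secure the invariance I would invoke the standing assumption that each $B_t$ is an entire GF on $L^1$: by \cite[Lemma 3.10]{FKKoz2011} the associated correlation function $k_t$ then inherits the Ruelle bound $k_t(\eta)\leq z^{|\eta|}$ with the \emph{same} constant $z$, and repeating the computation above yields $\|B_t\|_{1/z}\leq 1$ throughout the current interval of definition.

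With this unit-ball invariance in hand, the continuation is routine. I would pick $t_0\in(0,\tau)$ arbitrarily close to $\tau$, treat $B_{t_0}$ (which satisfies $\|B_{t_0}\|_{1/z}\leq 1$) as a new initial condition, and reapply Theorem \ref{Th13} in the same scale to obtain a solution on $[t_0,t_0+\tau)$. The uniqueness clause of Theorem \ref{Th13} forces the two solutions to agree on the overlap $[t_0,\tau)$, so they glue into a solution on $[0,t_0+\tau)$ whose right endpoint exceeds $\tau$. Choosing $t_0$ with $\tau-t_0<\tau/2$ at each stage, every restart advances the right endpoint by more than $\tau/2$; iterating, the endpoints diverge and the union of the successive intervals exhausts $[0,+\infty)$. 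Global uniqueness propagates stage by stage from the local uniqueness in Theorem \ref{Th13}, and the entire-GF property holds at each time by hypothesis.

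The main obstacle is precisely the point where the paper itself is careful: the entire scheme rests on knowing a priori that each local solution $B_t$ is genuinely a GF of some measure, which is what licenses the use of \cite[Lemma 3.10]{FKKoz2011} and hence the norm bound $\|B_t\|_{1/z}\leq 1$. This property is not part of the abstract Ovsjannikov machinery of Theorem \ref{Th1} and requires the separate probabilistic input indicated earlier (e.g.\ via \cite[Theorem 2.13]{GK2006}); I would therefore carry it as a standing hypothesis and emphasize that, granted it, the renormalization of the existence time to the constant value $\tau$ is exactly what converts the local statement of Theorem \ref{Th13} into the global conclusion.
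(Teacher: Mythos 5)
Your proposal is correct and follows essentially the same route as the paper: converting the Ruelle bound into the estimate $\|B_0\|_{1/z}\leq 1$, invoking Theorem \ref{Th13} on the scale $\{\mathcal{E}_\alpha : 0<\alpha\leq 1/z\}$, using the GF property together with \cite[Lemma 3.10]{FKKoz2011} to keep the solution in the unit ball of $\mathcal{E}_{1/z}$, and restarting with a uniform time step, gluing by uniqueness. Your explicit handling of the overlap and the choice $\tau-t_0<\tau/2$ merely formalizes what the paper leaves implicit, and your flagging of the entire-GF property as a standing hypothesis (justified via \cite[Theorem 2.13]{GK2006}) matches the paper's own treatment.
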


\section{Vlasov scaling}\label{Subsection3.2}

We proceed to investigate the Vlasov-type scaling proposed in \cite{FKK10a} 
for generic continuous particle systems and accomplished in \cite{FKK10} for 
the Glauber dynamics, now in terms of GF. As explained in both references, the 
aim is to construct a scaling of the operator $L$ defined in \eqref{Vl41}, 
$L_\eps$, $\eps >0$, in such a way that the rescale of a starting 
correlation function $k_0$, denote by $k_0^{(\eps)}$, provides a singularity 
with respect to $\eps$ of the type 
$k_0^{(\eps)}(\eta) \sim \eps^{-|\eta|} r_0(\eta)$, $\eta\in\Gamma_0$, being 
$r_0$ a function independent of $\eps$, and, moreover, the following two 
conditions are fulfilled. The first one is that under the scaling 
$L\mapsto L^\eps$ the solution $k^{(\varepsilon)}_t$, $t\geq 0$, to
\begin{equation*}
\frac \partial {\partial t}k_t^{(\varepsilon)}=\hat L_\varepsilon^*k^{\varepsilon}_t,\quad {k^{(\varepsilon)}_t}_{|t=0}=k_0^{(\varepsilon)}
\end{equation*}   
preserves the order of the singularity with respect to $\eps$, that is, 
$k_t^{(\eps)}(\eta) \sim \eps^{-|\eta|} r_t(\eta)$, $\eta\in\Gamma_0$. The 
second condition is that the dynamics $r_0 \mapsto r_t$ preserves the 
Lebesgue-Poisson exponents, that is, if $r_0$ is of the form 
$r_0=e_\lambda(\rho_0)$, then each $r_t$, $t>0$, is of the same type, i.e.,
$r_t=e_\lambda(\rho_t)$, where $\rho_t$ is a solution to a non-linear equation
(called a Vlasov-type equation). As shown in \cite[Example 8]{FKK10a}, 
\cite{FKK10}, this equation is given by
\begin{equation}
\frac{\partial}{\partial t}\rho_t(x)=-\rho_t(x)+ze^{-(\rho_t*\phi)(x)},\quad x\in\R^d,\label{Vl42}
\end{equation}
where $*$ denotes the usual convolution of functions. Existence of classical
solutions $0\leq\rho_t\in L^\infty$ to \eqref{Vl42} has been discussed in 
\cite{FKK10}, \cite{FKKoz2011}. Therefore, it is natural to consider the same 
scaling, but in GF.

The previous scheme was accomplished in \cite{FKK10} through the scale 
transformations $z\mapsto \varepsilon^{-1}z$ and $\phi\mapsto\varepsilon\phi$ 
of the operator $L$, that is,
\[
(L_\varepsilon F)(\gamma):=\sum_{x\in \gamma}\big(F(\gamma\setminus\{x\}) - F(\gamma)\big) 
+ \frac{z}{\varepsilon}\int_{\R^d} dx\,e^{-\varepsilon E(x,\gamma)} \big(F(\gamma\cup\{x\}) - F(\gamma)\big).
\]

To proceed towards GF, we consider $k^{(\varepsilon)}_t$ defined as before and 
$k^{(\varepsilon)}_{t,\mathrm{ren}}(\eta):=\varepsilon^{|\eta|}k^{(\varepsilon)}_t(\eta)$. In terms of GF, these yield
\[
B_t^{(\eps)}(\theta):=\int_{\Gamma_0} d\lambda(\eta)
e_\lambda(\theta,\eta)k_t^{(\eps)}(\eta),
\]
and
\[
B_{t,\mathrm{ren}}^{(\varepsilon)}(\theta):=\int_{\Gamma_0}d\lambda(\eta)\,e_\lambda(\theta,\eta)k_{t,\mathrm{ren}}^{(\varepsilon)}(\eta)=\int_{\Gamma_0}d\lambda(\eta)\,e_\lambda(\varepsilon\theta,\eta)k_t^{(\varepsilon)}(\eta)=B_t^{(\varepsilon)}(\varepsilon\theta),
\]
leading, as in \eqref{obtevol}, to the initial value problem
\begin{equation}
\frac{\partial}{\partial t}B_{t,\mathrm{ren}}^{(\varepsilon)}
=\tilde L_{\varepsilon, \mathrm{ren}}B_{t,\mathrm{ren}}^{(\varepsilon)},\quad
{B^{(\varepsilon)}_{t,\mathrm{ren}}}_{|t=0}=B_{0,\mathrm{ren}}^{(\varepsilon)}.\label{V12}
\end{equation}

\begin{proposition}
For all $\varepsilon>0$ and all $\theta\in L^1$, we have
\[
(\tilde L_{\varepsilon, \mathrm{ren}}B)(\theta)=
-\int_{\R^d}dx\,\theta(x)\left( \vd B(\theta,x) - zB\Bigl(\theta
e^{-\varepsilon\phi (x -\cdot )}+\frac{e^{-\varepsilon\phi (x -\cdot
)}-1}{\varepsilon}\Bigr)\right).
\]
\end{proposition}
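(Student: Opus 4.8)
The plan is to obtain $\tilde L_{\varepsilon,\mathrm{ren}}$ by transporting the already-known operator \eqref{LtildeGL} through two successive changes of variable: first the scale substitutions defining $L_\varepsilon$, and then the renormalization encoded in $B_{t,\mathrm{ren}}^{(\varepsilon)}(\theta)=B_t^{(\varepsilon)}(\varepsilon\theta)$. No new analytic input is needed beyond Theorem \ref{9Prop9.1.1}; the content is an exact tracking of powers of $\varepsilon$.

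First I would write down the operator $\tilde L_\varepsilon$ governing the unrenormalized evolution of $B_t^{(\varepsilon)}$. Since $L_\varepsilon$ is obtained from the operator $L$ in \eqref{Vl41} by the substitutions $z\mapsto\varepsilon^{-1}z$ and $\phi\mapsto\varepsilon\phi$, the identical substitutions applied to \eqref{LtildeGL} give
\[
(\tilde L_\varepsilon B)(\psi)= -\int_{\R^d} dx\,\psi(x)\Bigl( \vd B(\psi;x) - \tfrac{z}{\varepsilon}\, B\bigl(\psi\, e^{-\varepsilon\phi(x-\cdot)}+e^{-\varepsilon\phi(x-\cdot)}-1\bigr)\Bigr).
\]
Next I would read off from \eqref{V12} together with the identity $B_{t,\mathrm{ren}}^{(\varepsilon)}(\theta)=B_t^{(\varepsilon)}(\varepsilon\theta)$ that, for any entire $B$ with $C(\theta):=B(\varepsilon\theta)$, the renormalized operator must satisfy $(\tilde L_{\varepsilon,\mathrm{ren}}C)(\theta)=(\tilde L_\varepsilon B)(\varepsilon\theta)$. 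Equivalently, $\tilde L_{\varepsilon,\mathrm{ren}}$ is the conjugate of $\tilde L_\varepsilon$ by the dilation $S_\varepsilon:B\mapsto B(\varepsilon\,\cdot)$. It then remains to evaluate the displayed formula at $\psi=\varepsilon\theta$ and re-express everything through $C$.

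The one place requiring a small computation is the variational derivative. Differentiating $C(\theta)=B(\varepsilon\theta)$ along $\eta\in L^1$ gives $dC(\theta;\eta)=\varepsilon\,dB(\varepsilon\theta;\eta)$, so at the level of the kernels provided by Theorem \ref{9Prop9.1.1} one has $\vd C(\theta;x)=\varepsilon\,\vd B(\varepsilon\theta;x)$, i.e.\ $\vd B(\varepsilon\theta;x)=\varepsilon^{-1}\vd C(\theta;x)$. For the birth term I would factor $\varepsilon$ out of the argument: since $B(\cdot)=C(\varepsilon^{-1}\cdot)$, the argument $\varepsilon\theta\, e^{-\varepsilon\phi(x-\cdot)}+e^{-\varepsilon\phi(x-\cdot)}-1$ becomes, after multiplication by $\varepsilon^{-1}$, exactly $\theta\, e^{-\varepsilon\phi(x-\cdot)}+\varepsilon^{-1}(e^{-\varepsilon\phi(x-\cdot)}-1)$, the argument appearing in the statement.

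Substituting these two identities into $(\tilde L_\varepsilon B)(\varepsilon\theta)$ and pulling the factor $\varepsilon$ out of $\psi=\varepsilon\theta$, the prefactors combine as $\varepsilon\cdot\varepsilon^{-1}=1$ in the death term and $\varepsilon\cdot\tfrac{z}{\varepsilon}=z$ in the birth term, which is precisely the cancellation producing the claimed formula. The main point to get right is therefore the bookkeeping: matching the three independent powers of $\varepsilon$ (one from the dilated test function, one from the variational derivative, and one from the scaled activity) so that they cancel cleanly. I would also note, as a side remark ensuring the entire functionals are evaluated where they are defined, that the scaled argument lies in $L^1$, since $|e^{-\varepsilon\phi}-1|\leq\varepsilon\phi\in L^1$ exactly as in the proof of Proposition \ref{Proposition2}.
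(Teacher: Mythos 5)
Your proof is correct, but it takes a genuinely different route from the paper's. The paper never leaves the correlation-function picture: it quotes from \cite[Proposition 3.1]{FKK10} the explicit action of $\hat L_{\varepsilon, \mathrm{ren}}$ on coherent states $e_\lambda(\theta)$, integrates that expression against the correlation function $k$ of $B$, exchanges the Lebesgue--Poisson integral with the sum over $\xi\subseteq\eta$, and identifies the outcome through the identity $\int_{\Gamma_0}d\lambda(\eta)\,e_\lambda(g,\eta)\,\vd^{|\eta|}B(h;\eta)=B(h+g)$ from \cite[Proposition 11]{KoKuOl02}. You instead stay entirely at the GF level: substitute $z\mapsto z/\varepsilon$, $\phi\mapsto\varepsilon\phi$ into \eqref{LtildeGL} (legitimate, since the derivation cited from \cite{FKO05} applies to any activity $z>0$ and any non-negative integrable potential, which $\varepsilon\phi$ is), recognize $\tilde L_{\varepsilon, \mathrm{ren}}$ as the conjugate of $\tilde L_\varepsilon$ by the dilation $S_\varepsilon B=B(\varepsilon\,\cdot)$, and track the three powers of $\varepsilon$ via the chain rule $\vd (S_\varepsilon B)(\theta;x)=\varepsilon\,\vd B(\varepsilon\theta;x)$; your bookkeeping of the cancellations ($\varepsilon\cdot\varepsilon^{-1}$ in the death term, $\varepsilon\cdot z/\varepsilon$ in the birth term) is exact. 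Your approach is more elementary, needing no configuration-space combinatorics beyond what is already packed into \eqref{LtildeGL}, and it exposes the structure cleanly: renormalization is dilation conjugation. The one point you should make explicit is the conjugation identity itself: since $\tilde L_{\varepsilon, \mathrm{ren}}$ is defined, as in \eqref{obtevol}, through $\hat L_{\varepsilon, \mathrm{ren}}$ acting at the correlation-function level, the statement $(\tilde L_{\varepsilon,\mathrm{ren}}C)(\theta)=(\tilde L_\varepsilon B)(\varepsilon\theta)$ for $C=S_\varepsilon B$ is not merely "read off" from the evolution equation \eqref{V12}; it follows from the fact that the renormalization $k\mapsto\varepsilon^{|\cdot|}k$ corresponds under \eqref{BF_via_cf} to the dilation $B\mapsto B(\varepsilon\,\cdot)$, i.e.\ from the identity $\varepsilon^{|\eta|}e_\lambda(\theta,\eta)=e_\lambda(\varepsilon\theta,\eta)$, which is exactly the display the paper gives just before \eqref{V12}. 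With that one-line justification inserted, your argument is a complete and arguably cleaner proof; what the paper's computation buys in exchange is a verification of the formula directly from the defining correlation-function representation, rather than through the consistency of the two pictures.
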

\begin{proof}
As shown in \cite[Proposition 3.1]{FKK10},
\begin{eqnarray*}
&&(\hat L_{\varepsilon, \mathrm{ren}}e_\lambda(\theta))(\eta)\\
&=&-|\eta|e_\lambda(\theta,\eta)+z\sum_{\xi\subseteq\eta}e_\lambda(\theta,\xi)\int_{\R^d}dx\,\theta(x)e^{-\varepsilon E(x,\xi)}
e_\lambda\left(\frac{e^{-\varepsilon\phi(x-\cdot)}-1}{\varepsilon},\eta\setminus\xi\right).
\end{eqnarray*}
Therefore,
\[
(\tilde L_{\varepsilon, \mathrm{ren}}B)(\theta)=
\int_{\Gamma_0}d\lambda(\eta)\,(\hat L_{\varepsilon, \mathrm{ren}}e_\lambda(\theta))(\eta)k(\eta),
\]
with
\[
\int_{\Gamma_0}d\lambda(\eta)\,|\eta|e_\lambda(\theta,\eta)k(\eta)=
\int_{\R^d}dx\,\theta(x)\vd B(\theta;x)
\]
and
\begin{eqnarray*}
&&\int_{\Gamma_0}d\lambda(\eta)\,k(\eta)\sum_{\xi\subseteq\eta}e_\lambda(\theta,\xi)\int_{\R^d}dx\,\theta(x)e^{-\varepsilon E(x,\xi)}
e_\lambda\left(\frac{e^{-\varepsilon\phi(x-\cdot)}-1}{\varepsilon},\eta\setminus\xi\right)\\
&=&\int_{\R^d}dx\,\theta(x)\int_{\Gamma_0}d\lambda(\eta)\,e_\lambda\left(\frac{e^{-\varepsilon\phi(x-\cdot)}-1}{\varepsilon},\eta\right)\int_{\Gamma_0}d\lambda(\xi)\,
k(\eta\cup\xi)e_\lambda(\theta e^{-\varepsilon\phi(x-\cdot)},\xi)\\
&=&\int_{\R^d}dx\,\theta(x)\int_{\Gamma_0}d\lambda(\eta)\,e_\lambda\left(\frac{e^{-\varepsilon\phi(x-\cdot)}-1}{\varepsilon},\eta\right)
\vd^{|\eta|}B (\theta e^{-\varepsilon\phi(x-\cdot)};\eta)\\
&=&\int_{\R^d}dx\,\theta(x)B\left(\theta e^{-\varepsilon\phi (x -\cdot )}+\frac{e^{-\varepsilon\phi (x -\cdot )}-1}{\varepsilon}\right),
\end{eqnarray*}
where we have used the relation between variational derivatives derived in 
\cite[Proposition 11]{KoKuOl02}.
\end{proof}

\begin{proposition}\label{estimativas} (i) If $B\in\mathcal{E}_{\alpha}$ for 
some $\alpha >0$, then, for all $\theta\in L^1$, 
$(\tilde L_{\varepsilon,\mathrm{ren}}B)(\theta)$ converges as $\eps$ tends zero 
to
\[
(\tilde L_V B)(\theta):= -\int_{\R^d}dx\,\theta(x)\left( \vd
B(\theta;x) - zB\left(\theta
-\phi(x-\cdot)\right)\right).
\]
(ii) Let $\alpha_0>\alpha>0$ be given. If $B\in\mathcal{E}_{\alpha''}$ for some 
$\alpha''\in (\alpha, \alpha_0]$, then $\bigl\{\tilde L_{\varepsilon, \mathrm{ren}}B,\tilde{L}_V B\bigr\} \subset\mathcal{E}_{\alpha'}$ for all 
$\alpha\leq\alpha'<\alpha''$, and we have 
\[
\|\tilde{L}_{\#}B\|_{\alpha'}\leq \frac{\alpha_{0}}{\alpha''-\alpha'}
\left(1+z\alpha_0
e^{\frac{\|\phi\|_{L^1}}{\alpha}-1}\right)\|B\|_{\alpha''},
\]
where $\tilde{L}_{\#}=\tilde{L}_{\varepsilon, \mathrm{ren}}$ or 
$\tilde{L}_{\#}=\tilde{L}_V$.
\end{proposition}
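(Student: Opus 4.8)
The plan is to handle the norm estimate in (ii) first, since it is essentially a rerun of the argument already carried out for Proposition \ref{Proposition2}, and only then to establish the pointwise convergence claimed in (i). For (ii), I would observe that each of $\tilde L_{\varepsilon,\mathrm{ren}}$ and $\tilde L_V$ decomposes as $-(L_0-zL_1)$, where $L_0$ is precisely the operator of Lemma \ref{Lemma1} and $L_1$ is the operator of Lemma \ref{Lemma2} for a suitable choice of $\varphi,\psi$: in the rescaled case $\varphi(x,y)=e^{-\varepsilon\phi(x-y)}$ and $\psi(x,y)=\frac{e^{-\varepsilon\phi(x-y)}-1}{\varepsilon}$, while in the Vlasov case $\varphi(x,y)=1$ and $\psi(x,y)=-\phi(x-y)$. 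The crucial point is that in every instance the controlling constants are the same and, in particular, independent of $\varepsilon$: since $\phi\geq0$ one has $0\leq e^{-\varepsilon\phi}\leq1$, giving $c_0=1$, and the elementary inequality $0\leq1-e^{-s}\leq s$ for $s\geq0$ yields $\bigl|\frac{e^{-\varepsilon\phi(x-\cdot)}-1}{\varepsilon}\bigr|\leq\phi(x-\cdot)$, so $\|\psi(x,\cdot)\|_{L^1}\leq\|\phi\|_{L^1}=:c_1$ uniformly in $x$ and $\varepsilon$; the Vlasov choice gives the identical bounds trivially. With $c_0=1$ the constraint of Lemma \ref{Lemma2} reduces to $\alpha'<\alpha''$, which holds. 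Feeding these into Lemmas \ref{Lemma1} and \ref{Lemma2} and then estimating $\alpha',\alpha''\leq\alpha_0$ together with $\frac{1}{\alpha''}\leq\frac{1}{\alpha}$ reproduces verbatim the computation behind Proposition \ref{Proposition2}, giving both membership in $\mathcal{E}_{\alpha'}$ and the stated bound for $\tilde L_{\varepsilon,\mathrm{ren}}B$ and $\tilde L_V B$ simultaneously.

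For part (i) the term $\vd B(\theta;x)$ is independent of $\varepsilon$ and cancels, so it suffices to pass to the limit in $\int_{\R^d}dx\,\theta(x)B(u_\varepsilon(x,\cdot))$, where $u_\varepsilon(x,\cdot):=\theta e^{-\varepsilon\phi(x-\cdot)}+\frac{e^{-\varepsilon\phi(x-\cdot)}-1}{\varepsilon}$. First I would show that, for each fixed $x$, $u_\varepsilon(x,\cdot)\to\theta-\phi(x-\cdot)$ in $L^1$: the difference splits into $\theta\bigl(e^{-\varepsilon\phi(x-\cdot)}-1\bigr)$, dominated by $|\theta|$ and vanishing pointwise, and $\frac{e^{-\varepsilon\phi(x-\cdot)}-1}{\varepsilon}+\phi(x-\cdot)=\phi(x-\cdot)-\frac{1-e^{-\varepsilon\phi(x-\cdot)}}{\varepsilon}\geq0$, dominated by $\phi(x-\cdot)$ and vanishing pointwise, so dominated convergence gives $L^1$-convergence of each piece. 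Since $B$ is entire, hence continuous, on $L^1$, this forces $B(u_\varepsilon(x,\cdot))\to B(\theta-\phi(x-\cdot))$ for a.a.\ $x$. Finally, the $\mathcal{E}_\alpha$-estimate together with the uniform bound $\|u_\varepsilon(x,\cdot)\|_{L^1}\leq\|\theta\|_{L^1}+\|\phi\|_{L^1}$ (again from $e^{-\varepsilon\phi}\leq1$ and the same inequality) provides the $\varepsilon$-independent integrable majorant $|\theta(x)B(u_\varepsilon(x,\cdot))|\leq|\theta(x)|\,\|B\|_\alpha e^{(\|\theta\|_{L^1}+\|\phi\|_{L^1})/\alpha}$, and a second application of dominated convergence yields $(\tilde L_{\varepsilon,\mathrm{ren}}B)(\theta)\to(\tilde L_V B)(\theta)$.

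The routine part is (ii), which merely transports the mechanism of Proposition \ref{Proposition2}; its only subtlety is checking that the constants $c_0,c_1$ can be chosen uniformly in $\varepsilon$. The genuinely delicate point is (i), namely organizing the two nested limits in the correct order: one must first pass to the limit \emph{inside} the functional $B$, which needs $L^1$-convergence of the argument and continuity of $B$, and only afterwards \emph{inside} the $x$-integral, each step relying on a different dominating function. The elementary inequality $0\leq1-e^{-s}\leq s$ is precisely what supplies the $\varepsilon$-uniform $L^1$-majorant $\phi(x-\cdot)$ for the $\psi$-part; once this is secured, both the convergence and the uniform integrable bound follow without difficulty.
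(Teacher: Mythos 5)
Your proposal is correct and follows essentially the same route as the paper: part (ii) is obtained by substituting $\varphi=e^{-\varepsilon\phi}$ (resp.\ $1$) and $\psi=\frac{e^{-\varepsilon\phi}-1}{\varepsilon}$ (resp.\ $-\phi(x-\cdot)$) into Lemma \ref{Lemma2}, with the $\varepsilon$-uniform constants $c_0=1$, $c_1=\|\phi\|_{L^1}$ coming from $0\leq 1-e^{-s}\leq s$, and then combining with Lemma \ref{Lemma1} as in Proposition \ref{Proposition2}; part (i) is the same two-step limit (pointwise $L^1$-convergence of the argument, continuity of the entire functional $B$, then dominated convergence in $x$ with the majorant $|\theta(x)|\,\|B\|_\alpha e^{(\|\theta\|_{L^1}+\|\phi\|_{L^1})/\alpha}$). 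The only difference is that you spell out, via a dominated-convergence argument, the inner $L^1$-convergence that the paper dismisses as clear, which is a harmless elaboration.
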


\begin{proof} (i) First we observe that for 
a.a.~$x\in\R^d$ one clearly has
\[
\lim_{\varepsilon\searrow 0}\left(\theta e^{-\varepsilon\phi (x -\cdot )}+\frac{e^{-\varepsilon\phi (x -\cdot )}-1}{\varepsilon}\right)=\theta -\phi(x-\cdot)\ \mbox{in}\ L^1, 
\]
and thus, due to the continuity of $B$ in $L^1$ ($B$ is even entire on $L^1$), 
the following limit holds
\[
\lim_{\varepsilon\searrow 0}B\left(\theta e^{-\varepsilon\phi (x -\cdot )}+\frac{e^{-\varepsilon\phi (x -\cdot )}-1}{\varepsilon}\right)=B\left(\theta -\phi(x-\cdot)\right),\quad a.a.~x\in\R^d.
\]
This shows the pointwise convergence of the integrand functions which 
appear in the definition of $(\tilde L_{\varepsilon, \mathrm{ren}}B)(\theta)$ and 
$(\tilde L_V B)(\theta)$. In addition, for all $\varepsilon >0$ we have
\[
\left|B\left(\theta e^{-\varepsilon\phi (x -\cdot )}+\frac{e^{-\varepsilon\phi (x -\cdot )}-1}{\varepsilon}\right)\right|\leq
\|B\|_\alpha\exp\left(\frac{1}{\alpha}\|\theta\|_{L^1}+\frac{1}{\alpha}\|\phi\|_{L^1}\right),
\]
leading through an application of the Lebesgue dominated convergence theorem 
to the required limit. 

(ii) In Lemma \ref{Lemma2} replace $\varphi$ by
$e^{-\varepsilon\phi}$ and $\psi$ by
$\frac{e^{-\varepsilon\phi}-1}{\varepsilon}$. Arguments similar to
prove Proposition \ref{Proposition2} together with Lemma
\ref{Lemma1} complete the proof for $\tilde{L}_{\varepsilon, \mathrm{ren}}$. 
For $\tilde{L}_V$, the proof follows similarly.
\end{proof}

Proposition \ref{estimativas} (ii) provides similar estimate of norms for 
$\tilde{L}_{\varepsilon, \mathrm{ren}}$, $\eps>0$, and the limiting mapping 
$\tilde{L}_V$, namely, $\|\tilde{L}_{\varepsilon, \mathrm{ren}}\|_{\alpha'},\|\tilde{L}_V\|_{\alpha'}\leq M\|B\|_{\alpha''}$, $0<\alpha\leq\alpha'<\alpha''\leq\alpha_0$, with
\[
M:=\frac{\alpha_{0}}{\alpha''-\alpha'}
\left(1+z\alpha_0
e^{\frac{\|\phi\|_{L^1}}{\alpha}-1}\right). 
\]
Therefore, given any 
$B_{0,V},B_{0,\mathrm{ren}}^{(\eps)}\in\mathcal{E}_{\alpha_0}$, $\eps>0$, it 
follows from Theorem \ref{Th13} and its proof that for each 
$s\in\left(0,s_0\right)$ and $\delta=\frac{1}{eM}$ there is a unique solution 
$B_{t,\mathrm{ren}}^{(\eps)}:\left[0,\delta(s_0-s)\right)\to\mathcal{E}_\alpha$, 
$\eps>0$, to each initial value problem \eqref{V12} and a unique solution 
$B_{t,V}:\left[0,\delta(s_0-s)\right)\to\mathcal{E}_\alpha$ to the initial 
value problem
\begin{equation}
\frac{\partial}{\partial t}B_{t,V}=\tilde L_{V}B_{t,V},\quad
{B_{t,V}}_{|t=0}=B_{0,V}.\label{V19}
\end{equation}
That is, independent of the initial value problem under consideration, the 
solutions obtained are defined on the same time-interval and with values in 
the same Banach space. Therefore, it is natural to analyze under which 
conditions the solutions to \eqref{V12} converges to the solution to
\eqref{V19}. This follows straightforwardly from a general result which proof 
(see Appendix) follows closely the lines of the proof of Theorem \ref{Th1}.

\begin{theorem}\label{Thconv} On a scale of Banach spaces 
$\{\B_s: 0<s\leq s_0\}$ consider a family of initial value problems
\begin{equation}
\frac{du_\eps(t)}{dt}=A_{\eps}u_\eps(t),\quad u_\eps(0)=u_{\eps}\in
\mathbb{B}_{s_0},\quad \eps\geq0,\label{V1eps}
\end{equation}
where, for each $s\in(0,s_0)$ fixed and for each pair $s', s''$ such that 
$s\leq s'<s''\leq s_0$, $A_\eps:\B_{s''}\to\B_{s'}$ is a linear mapping so that 
there is an $M>0$ such that for all $u\in\B_{s''}$
\[
\|A_{\eps}u\|_{s'}\leq\frac{M}{s''-s'}\|u\|_{s''}.
\]
Here $M$ is independent of $\eps, s',s''$ and $u$, however it might depend
continuously on $s,s_0$. Assume that there is a $p\in\N$ and for each 
$\eps>0$ there is an $N_\eps>0$ such that for each pair $s', s''$, 
$s\leq s'<s''\leq s_0$, and all $u\in\B_{s''}$
\[
\|A_{\eps}u-A_0u\|_{s'}\leq \sum_{k=1}^p\frac{N_\eps}{(s''-s')^k}\|u\|_{s''}.
\]
In addition, assume that $\lim_{\eps\rightarrow0}N_\eps=0$ and 
$\lim_{\eps\rightarrow0}\|u_{\eps}(0)-u_{0}(0)\|_{s_0}=0$.

Then, for each $s\in(0,s_0)$, there is a constant $\delta>0$ (which depends on 
$M$) such that there is a unique solution 
$u_\eps:\left[0,\delta(s_0-s)\right)\to\B_s$, $\eps\geq0$, to each initial 
value problem (\ref{V1eps}) and for all $t\in\left[0,\delta(s_0-s)\right)$ we 
have
\[
\lim_{\eps\rightarrow0}\|u_{\eps}(t)-u_{0}(t)\|_{s}=0.
\]
\end{theorem}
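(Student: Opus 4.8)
The plan is to deduce existence and uniqueness of each $u_\eps$ directly from Theorem~\ref{Th1}, and then to obtain the convergence by rerunning, for the difference $u_\eps(t)-u_0(t)$, the same Ovsjannikov-type chain estimate that underlies Theorem~\ref{Th1}. First, since every $A_\eps$, $\eps\ge0$, satisfies the hypothesis of Theorem~\ref{Th1} with one and the same constant $M$, that theorem yields for each $\eps\ge0$ a unique solution $u_\eps:[0,\delta(s_0-s))\to\B_s$ with the \emph{same} $\delta=\frac1{eM}$; in particular all the $u_\eps$ live on a common time interval and in a common space, which is what makes the comparison meaningful. I would also recall from the proof of Theorem~\ref{Th1} (Appendix) that, because each $A_\eps$ is time independent, the solution is the limit in $\B_s$ of the successive approximations and admits the explicit form
\[
u_\eps(t)=\sum_{n=0}^\infty\frac{t^n}{n!}\,A_\eps^n u_\eps(0),
\]
the series converging in $\B_s$ for $t<\delta(s_0-s)$, where $A_\eps^n u_\eps(0)$ is made sense of through a chain of intermediate scale parameters between $s$ and $s_0$.

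Next I would write the difference as
\[
u_\eps(t)-u_0(t)=\sum_{n=0}^\infty\frac{t^n}{n!}\Bigl(A_\eps^n\bigl(u_\eps(0)-u_0(0)\bigr)+(A_\eps^n-A_0^n)u_0(0)\Bigr),
\]
and treat the two contributions separately. For the first, the pure chain estimate of Theorem~\ref{Th1} applied to the datum $u_\eps(0)-u_0(0)\in\B_{s_0}$ gives $\|A_\eps^n(u_\eps(0)-u_0(0))\|_s\le M^n n^n(s_0-s)^{-n}\|u_\eps(0)-u_0(0)\|_{s_0}$, so after multiplying by $t^n/n!$ and using $n^n/n!\le e^n$ the first contribution is bounded by $\|u_\eps(0)-u_0(0)\|_{s_0}\sum_n\bigl(eMt/(s_0-s)\bigr)^n=O(\|u_\eps(0)-u_0(0)\|_{s_0})$ for $t<\delta(s_0-s)$. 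For the second, I would use the telescoping identity $A_\eps^n-A_0^n=\sum_{j=0}^{n-1}A_\eps^{\,j}(A_\eps-A_0)A_0^{\,n-1-j}$, so that each summand contains exactly one factor $A_\eps-A_0$, to which the hypothesis $\|A_\eps u-A_0u\|_{s'}\le\sum_{k=1}^p N_\eps(s''-s')^{-k}\|u\|_{s''}$ applies, while the remaining $n-1$ factors are controlled by $M/(s''-s')$.

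Subdividing $[s,s_0]$ into $n$ equal gaps of length $(s_0-s)/n$ and assigning one gap to each operator factor, every one of the $n$ telescoping summands is bounded by $N_\eps\|u_0(0)\|_{s_0}(Mn/(s_0-s))^{n-1}\sum_{k=1}^p(n/(s_0-s))^k$. Multiplying by $t^n/n!$, summing over the $n$ summands and over $n$, and again invoking $n^n/n!\le e^n$ to absorb the polynomial factors (of order $n^{p+1}$) into the geometric decay, one finds a series converging for $t<\delta(s_0-s)$ whose sum is $O(N_\eps)$. Combining the two bounds yields, for every $t\in[0,\delta(s_0-s))$ and in fact uniformly on compact subintervals,
\[
\|u_\eps(t)-u_0(t)\|_s\le C(t)\bigl(\|u_\eps(0)-u_0(0)\|_{s_0}+N_\eps\bigr),
\]
with $C(t)$ independent of $\eps$; letting $\eps\to0$ and using $N_\eps\to0$ and $\|u_\eps(0)-u_0(0)\|_{s_0}\to0$ then finishes the proof.

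The routine part is the chain estimate itself, already carried out for Theorem~\ref{Th1}. The main obstacle I anticipate is the bookkeeping for the telescoped term: the difference bound carries the worse inverse-gap powers $(s''-s')^{-k}$ up to $k=p$, and telescoping produces $n$ summands, so one must verify that these extra polynomial-in-$n$ factors do not shrink the radius of convergence, i.e.\ that the resulting polynomial-times-geometric series still converges precisely on the interval $[0,\delta(s_0-s))$ on which the solutions exist. This is exactly where $n^n/n!\le e^n$ is used, and it is the one place where care is needed to keep $\delta=1/(eM)$ unchanged.
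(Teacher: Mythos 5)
Your proposal is correct and takes essentially the same approach as the paper's own proof: the same series representation of each solution coming from Theorem \ref{Th1}, the same splitting of $A_\eps^n u_\eps(0)-A_0^n u_0(0)$ into $A_\eps^n\bigl(u_\eps(0)-u_0(0)\bigr)$ plus $(A_\eps^n-A_0^n)u_0(0)$, the same telescoping identity for $A_\eps^n-A_0^n$, and the same equal-gap partition of $[s,s_0]$ with the polynomial-in-$n$ factors absorbed into the geometric decay via $n^n/n!\le e^n$. The only immaterial difference is that you estimate the full series directly, whereas the paper truncates to partial sums and uses a three-term triangle inequality.
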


To proceed to an application of this general result one needs the following 
estimate of norms.

\begin{proposition}\label{Prop3}
Assume that $0\leq\phi\in L^1\cap L^\infty$ and let $\alpha_0>\alpha>0$ be 
given. Then, for all $B\in\mathcal{E}_{\alpha''}$, 
$\alpha''\in (\alpha, \alpha_0]$, the following estimate holds
 \begin{equation*}
\|\tilde L_{\varepsilon, \mathrm{ren}}B-\tilde L_V B\|_{\alpha'}\leq
\eps z\|\phi\|_{L^\infty}\|B\|_{\alpha''}e^{\frac{\|\phi\|_{L^1}}{\alpha}}\biggl(
  \frac{\|\phi\|_{L^1} \alpha_0}{\alpha''-\alpha'}+\frac{4\alpha_0^3}{(\alpha''-\alpha')^2e}\biggr)
\end{equation*}
for all $\alpha'$ such that $\alpha\leq\alpha'<\alpha''$ and all $\eps>0$.
\end{proposition}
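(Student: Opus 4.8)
The plan is to reduce the claim to a pointwise (in $\theta$) estimate of the difference of the two functionals and then to pass to the supremum defining $\|\cdot\|_{\alpha'}$, exactly along the lines of the proofs of Lemmata \ref{Lemma1} and \ref{Lemma2}. Subtracting the two expressions for $\tilde L_{\eps,\mathrm{ren}}$ and $\tilde L_V$, the first-order terms $\vd B(\theta;x)$ cancel and one is left with
\[
(\tilde L_{\eps,\mathrm{ren}}B-\tilde L_V B)(\theta)=z\int_{\R^d}dx\,\theta(x)\bigl(B(g_\eps^x)-B(g_0^x)\bigr),
\]
where $g_\eps^x:=\theta e^{-\eps\phi(x-\cdot)}+\eps^{-1}\bigl(e^{-\eps\phi(x-\cdot)}-1\bigr)$ and $g_0^x:=\theta-\phi(x-\cdot)$. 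First I would control $B(g_\eps^x)-B(g_0^x)$ by the fundamental theorem of calculus along the segment joining the two arguments: since $B$ is entire, $B(g_\eps^x)-B(g_0^x)=\int_0^1 dB\bigl(g_0^x+t(g_\eps^x-g_0^x);\,g_\eps^x-g_0^x\bigr)\,dt$, whence by Theorem \ref{9Prop9.1.1}
\[
|B(g_\eps^x)-B(g_0^x)|\leq\|g_\eps^x-g_0^x\|_{L^1}\sup_{0\leq t\leq1}\bigl\|\vd B\bigl(g_0^x+t(g_\eps^x-g_0^x);\cdot\bigr)\bigr\|_{L^\infty(\R^d)}.
\]

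Next I would estimate the two factors separately. For the increment, writing $g_\eps^x-g_0^x=\theta\bigl(e^{-\eps\phi(x-\cdot)}-1\bigr)+\eps^{-1}\bigl(e^{-\eps\phi(x-\cdot)}-1+\eps\phi(x-\cdot)\bigr)$ and using $0\leq\phi\in L^1\cap L^\infty$ together with the elementary inequalities $0\leq1-e^{-s}\leq s$ and $0\leq e^{-s}-1+s\leq s^2/2$ for $s\geq0$ (applied with $s=\eps\phi(x-y)\leq\eps\|\phi\|_{L^\infty}$, together with $\phi^2\leq\|\phi\|_{L^\infty}\phi$) yields a bound uniform in $x$,
\[
\|g_\eps^x-g_0^x\|_{L^1}\leq\eps\|\phi\|_{L^\infty}\Bigl(\|\theta\|_{L^1}+\tfrac12\|\phi\|_{L^1}\Bigr).
\]
For the derivative factor one observes that $\|g_\eps^x\|_{L^1},\|g_0^x\|_{L^1}\leq\|\theta\|_{L^1}+\|\phi\|_{L^1}$ (again by $e^{-\eps\phi}\leq1$ and $|e^{-s}-1|\leq s$), so by convexity the whole segment stays in the $L^1$-ball of radius $\|\theta\|_{L^1}+\|\phi\|_{L^1}$; applying \eqref{2011!} to $B\in\mathcal{E}_{\alpha''}$ and optimizing the free radius at $r=\alpha''$ gives, along the segment, $\|\vd B(w;\cdot)\|_{L^\infty}\leq\frac{e}{\alpha''}\|B\|_{\alpha''}e^{\frac{1}{\alpha''}(\|\theta\|_{L^1}+\|\phi\|_{L^1})}$.

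Combining the two bounds, integrating in $x$ (which produces $\int_{\R^d}|\theta(x)|\,dx=\|\theta\|_{L^1}$), and finally forming $\|\cdot\|_{\alpha'}=\sup_\theta e^{-\|\theta\|_{L^1}/\alpha'}|\cdot|$, I am led to estimate $\sup_{s\geq0}e^{-ms}\bigl(s^2+\tfrac12\|\phi\|_{L^1}s\bigr)$ with $s=\|\theta\|_{L^1}$ and $m=\frac1{\alpha'}-\frac1{\alpha''}=\frac{\alpha''-\alpha'}{\alpha'\alpha''}$. The elementary bound $\sup_{s\geq0}s^ke^{-ms}=(k/(em))^k$ for $k=1,2$ then produces the two summands $\frac{4}{e^2m^2}$ and $\frac{\|\phi\|_{L^1}}{2em}$; substituting $1/m=\alpha'\alpha''/(\alpha''-\alpha')$, absorbing the prefactor $\frac{e}{\alpha''}$, and bounding $\alpha',\alpha''\leq\alpha_0$ and $e^{\|\phi\|_{L^1}/\alpha''}\leq e^{\|\phi\|_{L^1}/\alpha}$ gives precisely the claimed terms $\frac{\|\phi\|_{L^1}\alpha_0}{\alpha''-\alpha'}$ and $\frac{4\alpha_0^3}{(\alpha''-\alpha')^2e}$ (with a little room to spare in the first constant).

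I expect the main obstacle to be the increment estimate above: one must split $g_\eps^x-g_0^x$ into a first-order part, linear in $\theta$ and controlled by $1-e^{-s}\leq s$, and a genuinely second-order Taylor remainder $e^{-s}-1+s\leq s^2/2$, which is exactly what supplies the extra power of $\phi$ and hence the factor $\|\phi\|_{L^\infty}\|\phi\|_{L^1}$. It is the coexistence of these two different polynomial orders in $\|\theta\|_{L^1}$ that forces the two distinct powers of $(\alpha''-\alpha')$ in the denominators. The remaining optimizations in $r$ and in $s$ are routine, following the pattern already established in Lemma \ref{Lemma1}.
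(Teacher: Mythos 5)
Your proof is correct and follows essentially the same route as the paper's: the $\vd B$ terms cancel, the difference $B(g_\eps^x)-B(g_0^x)$ is controlled by the mean-value/fundamental-theorem argument along the segment joining the two arguments together with the variational-derivative bound \eqref{2011!} optimized at $r=\alpha''$, the increment is bounded by $\eps\|\phi\|_{L^\infty}\bigl(\|\theta\|_{L^1}+\|\phi\|_{L^1}\bigr)$ (your Taylor remainder even retains the sharper factor $\tfrac12$, which the paper discards), and the final supremum over $\theta$ is handled by $xe^{-mx}\leq\frac{1}{em}$ and $x^2e^{-mx}\leq\frac{4}{m^2e^2}$ with $m=\frac{1}{\alpha'}-\frac{1}{\alpha''}$, exactly as in the paper. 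There are no gaps.
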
 

\begin{proof} First we observe that
\begin{eqnarray}
&&\left|(\tilde L_{\varepsilon, \mathrm{ren}}B)(\theta)-(\tilde L_V B)(\theta)\right|\leq z\int_{\R^d}dx\,\left|\theta(x)\right|\nonumber\\
&&\qquad\quad\times\left|  B\left(\theta e^{-\varepsilon\phi (x -\cdot )}+\frac{e^{-\varepsilon\phi (x -\cdot
)}-1}{\varepsilon}\right)-B\left(\theta -\phi (x -\cdot)\right)\right|.\label{Ola}
\end{eqnarray}
In order to estimate (\ref{Ola}), given any $\theta_1,\theta_2\in L^1$, let us 
consider the function 
$C_{\theta_1,\theta_2}(t)=B\left(t\theta_1+(1-t)\theta_2\right)$, 
$t\in\left[0,1\right]$. One has
\begin{align*}
\frac{\partial}{\partial
t} C_{\theta_1,\theta_2}(t)&=\frac{\partial}{\partial
s} C_{\theta_1,\theta_2}(t+s)\Bigr|_{s=0}=\frac{\partial}{\partial
s} B\bigl(\theta_2+t(\theta_1-\theta_2)+s(\theta_1-\theta_2)\bigr)\Bigr|_{s=0}\\&=\int_{\mathbb{R}^d}dx\,(\theta_1(x)-\theta_2(x))\,\vd B(\theta_2+t(\theta_1-\theta_2);x),
\end{align*}
leading to
\begin{align*}
\bigl|B(\theta_1)-B(\theta_2)\bigr|&=\bigl|C_{\theta_1,\theta_2}(1)-C_{\theta_1,\theta_2}(0)\bigr|
\\&\leq \max_{t\in[0,1]}\int_{\mathbb{R}^d}dx\,\bigl|\theta_1(x)-\theta_2
(x)\bigr| \bigl|\vd B(\theta_2+t(\theta_1-\theta_2);x)\bigr|\\&\leq\|\theta_1-\theta_2\|_{L^1}\max_{t\in[0,t]}\|\vd B(\theta_2+t(\theta_1-\theta_2);\cdot)\|_{L^\infty},
\end{align*}
where, through similar arguments to prove Lemma \ref{Lemma1},
\[
\bigl\|\vd B(\theta_2+t(\theta_1-\theta_2);\cdot)\bigr\|_{L^\infty}\leq\frac{e}{\alpha''}\|B\|_{\alpha''}\exp\left(\frac{\|\theta_2+t(\theta_1-\theta_2)\|_{L^1}}{\alpha''}\right).
\]
As a result
\[
\bigl|B(\theta_1)-B(\theta_2)\bigr|\leq \frac{e}{\alpha''}\|\theta_1-\theta_2\|_{L^1}\|B\|_{\alpha''}\max_{t\in[0,1]}\exp\left(\frac{t\|\theta_1\|_{L^1}+(1-t)\|\theta_2\|_{L^1}}{\alpha''}\right),
\]
for all $\theta_1,\theta_2\in L^1$. In particular, this shows that
\begin{align*}
&\left|  B\Bigl(\theta e^{-\varepsilon\phi (x -\cdot
)}+\frac{e^{-\varepsilon\phi (x -\cdot
)}-1}{\varepsilon}\Bigr)-B\bigl(\theta -\phi (x -\cdot
)\bigr)\right|\\\leq&
\eps\frac{e}{\alpha''}\|\phi\|_{L^\infty}\|B\|_{\alpha''}\left(\|\theta\|_{L^1}+
\|\phi\|_{L^1}\right)\\&\qquad\times
\max_{t\in[0,1]}\exp\left(\frac{1}{\alpha''}\left(t\left(\|\theta\|_{L^1}+
\|\phi\|_{L^1}\right)+(1-t)\left(\|\theta\|_{L^1}+\|\phi\|_{L^1}\right)\right)\right)\\=&\,\eps
\frac{e}{\alpha''}\|\phi\|_{L^\infty}\|B\|_{\alpha''}
\left(\|\theta\|_{L^1}+\|\phi\|_{L^1}\right)
\exp\left(\frac{1}{\alpha''}\left(\|\theta\|_{L^1}+\|\phi\|_{L^1}\right)\right),
\end{align*}
where we have used the inequalities
\begin{align*}
\|\theta e^{-\varepsilon\phi (x -\cdot )}-\theta\|_{L^1}&\leq\eps\|\phi\|_{L^\infty}\|\theta\|_{L^1},\\
\Bigl\|\frac{e^{-\varepsilon\phi (x -\cdot
)}-1}{\varepsilon}+\phi (x -\cdot
)\Bigr\|_{L^1}&\leq\eps\|\phi\|_{L^\infty}\|\phi\|_{L^1},\\
\Bigl\|\theta e^{-\varepsilon\phi (x -\cdot )}+\frac{e^{-\varepsilon\phi (x -\cdot
)}-1}{\varepsilon}\Bigr\|_{L^1}&\leq\|\theta\|_{L^1}+\|\phi\|_{L^1}.
\end{align*}

In this way we obtain
\begin{eqnarray*}
&&\|\tilde L_{\varepsilon, \mathrm{ren}}B-\tilde L_V B\|_{\alpha'}\\
&\leq&\,\eps \frac{ze}{\alpha''}\|\phi\|_{L^\infty}\|B\|_{\alpha''}
e^{\frac{\|\phi\|_{L^1}}{\alpha''}}\left\{\sup_{\theta\in L^1}\left(\|\theta\|_{L^1}^2
\exp\left(\|\theta\|_{L^1}\left(\frac{1}{\alpha''}-\frac{1}{\alpha'}\right)\right)\right)\right.\\
&&\left.+\|\phi\|_{L^1}\sup_{\theta\in L^1}\left(\|\theta\|_{L^1}\exp\left(\|\theta\|_{L^1}\left(\frac{1}{\alpha''}-\frac{1}{\alpha'}\right)\right)\right)\right\},
\end{eqnarray*}
and the proof follows using the inequalities 
$xe^{-mx}\leq\frac{1}{me}$ and $x^2e^{-mx}\leq\frac{4}{m^2e^2}$ for $x\geq0$, 
$m>0$. 
\end{proof}

We are now in conditions to state the following result.

\begin{theorem}\label{Props1} Given an $0<\alpha<\alpha_0$, let 
$B_{t,\mathrm{ren}}^{(\varepsilon)}, B_{t,V}$, 
$t\in\left[0,\delta(\alpha_0-\alpha)\right)$, be the 
local solutions in $\mathcal{E}_{\alpha}$ to the initial value problems 
\eqref{V12}, \eqref{V19} with $B_{0,\mathrm{ren}}^{(\varepsilon)},B_{0,V}\in\mathcal{E}_{\alpha_0}$. If $0\leq\phi\in L^1\cap L^\infty$ and
$\lim_{\varepsilon\rightarrow 0}\|B_{0,\mathrm{ren}}^{(\varepsilon)}-B_{0,V}\|_{\alpha_0}=0$, then, for each $t\in\left[0,\delta(\alpha_0-\alpha)\right)$, 
\[
\lim_{\varepsilon\rightarrow 0}\|B_{t,\mathrm{ren}}^{(\varepsilon)}-B_{t,V}\|_{\alpha}=0.
\]
Moreover, if 
$B_{0,V}(\theta)=\exp\left(\int_{\mathbb{R}^d}dx\,\rho_0(x)\theta(x)\right)$, 
$\theta\in L^1$, for some function $0\leq\rho_0\in L^\infty$ such that
$\|\rho_0\|_{L^\infty}\leq \frac{1}{\alpha_0}$, and 
$\max\{\frac{1}{\alpha_0},z\}<\frac{1}{\alpha}$ then, for each
$t\in\left[0,\delta(\alpha_0-\alpha)\right)$,
\begin{equation}\label{expsol}
B_{t,V}(\theta)=\exp\left(\int_{\mathbb{R}^d}dx\,
\rho_t(x)\theta(x)\right), \quad \theta\in L^1,
\end{equation}
where $0\leq\rho_t\in L^\infty$ is a classical solution to the equation
\eqref{Vl42} such that, for each $t\in\left[0,\delta(\alpha_0-\alpha)\right)$, 
$\|\rho_t\|_{L^\infty}\leq\frac{1}{\alpha}$.
\end{theorem}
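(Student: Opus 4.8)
The proof splits naturally into two independent parts, matching the two assertions of the statement. The first part---the vanishing of $\|B_{t,\mathrm{ren}}^{(\eps)}-B_{t,V}\|_\alpha$ as $\eps\searrow0$---is a direct application of Theorem~\ref{Thconv} on the scale $\{\mathcal{E}_s:0<s\leq\alpha_0\}$, with $A_\eps=\tilde L_{\eps,\mathrm{ren}}$ for $\eps>0$ and $A_0=\tilde L_V$. The plan is to verify the three hypotheses of that theorem in turn. The uniform boundedness hypothesis $\|A_\eps u\|_{s'}\leq \frac{M}{s''-s'}\|u\|_{s''}$ with an $M$ independent of $\eps$ is exactly Proposition~\ref{estimativas}(ii), taking $M=\alpha_0\bigl(1+z\alpha_0 e^{\|\phi\|_{L^1}/\alpha-1}\bigr)$. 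The convergence hypothesis of the form $\|A_\eps u-A_0u\|_{s'}\leq\sum_{k=1}^p \frac{N_\eps}{(s''-s')^k}\|u\|_{s''}$ with $N_\eps\to0$ is supplied by Proposition~\ref{Prop3} with $p=2$ and
\[
N_\eps=\eps z\|\phi\|_{L^\infty}e^{\|\phi\|_{L^1}/\alpha}\max\Bigl\{\|\phi\|_{L^1}\alpha_0,\tfrac{4\alpha_0^3}{e}\Bigr\},
\]
which indeed tends to $0$ as $\eps\searrow0$. The remaining hypothesis, $\lim_{\eps\to0}\|u_\eps(0)-u_0(0)\|_{s_0}=0$, is precisely the assumed initial-data convergence $\lim_{\eps\to0}\|B_{0,\mathrm{ren}}^{(\eps)}-B_{0,V}\|_{\alpha_0}=0$. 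With all hypotheses in place, Theorem~\ref{Thconv} delivers the conclusion on the common time interval $[0,\delta(\alpha_0-\alpha))$ for each fixed $t$, where $\delta=\frac{1}{eM}$ as recorded before the statement.

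**The second part (the explicit exponential solution).** Here the strategy is to guess the form of the solution and verify it, exploiting uniqueness from Theorem~\ref{Th13}. First I would compute $\tilde L_V$ applied to an exponential functional $E_\rho(\theta):=\exp\bigl(\int_{\R^d}dx\,\rho(x)\theta(x)\bigr)$. Since the variational derivative of $E_\rho$ at $\theta$ in direction $x$ is $\vd E_\rho(\theta;x)=\rho(x)E_\rho(\theta)$, and since $E_\rho(\theta-\phi(x-\cdot))=E_\rho(\theta)\exp\bigl(-\int dy\,\rho(y)\phi(x-y)\bigr)=E_\rho(\theta)e^{-(\rho*\phi)(x)}$ using $\phi(x-y)=\phi(y-x)$, one finds
\[
(\tilde L_V E_\rho)(\theta)=-\int_{\R^d}dx\,\theta(x)\,\bigl(\rho(x)-ze^{-(\rho*\phi)(x)}\bigr)\,E_\rho(\theta).
\]
Now insert the ansatz $B_{t,V}=E_{\rho_t}$: the left-hand side of \eqref{V19} is $\frac{\partial}{\partial t}E_{\rho_t}(\theta)=\bigl(\int dx\,\dot\rho_t(x)\theta(x)\bigr)E_{\rho_t}(\theta)$, and matching the two linear-in-$\theta$ coefficients forces $\dot\rho_t(x)=-\rho_t(x)+ze^{-(\rho_t*\phi)(x)}$, which is exactly the Vlasov equation \eqref{Vl42}. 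Thus if $\rho_t$ solves \eqref{Vl42} with $\rho_0$ as given, then $E_{\rho_t}$ solves \eqref{V19}; existence of a classical solution $0\leq\rho_t\in L^\infty$ is quoted from \cite{FKK10,FKKoz2011}. It remains to check two consistency conditions so that $E_{\rho_t}$ lives in the correct Banach space and coincides with the solution produced in the first part. The bound $\|E_{\rho_t}(\theta)\|=\exp\bigl(\int\rho_t\theta\bigr)\leq \exp\bigl(\|\rho_t\|_{L^\infty}\|\theta\|_{L^1}\bigr)$ shows $E_{\rho_t}\in\mathcal{E}_{1/\|\rho_t\|_{L^\infty}}$, so I must propagate the a priori bound $\|\rho_t\|_{L^\infty}\leq\frac{1}{\alpha}$; this is where the death term $-\rho_t$ and the nonnegativity $0\leq ze^{-(\rho_t*\phi)(x)}\leq z$ enter. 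A Gronwall/comparison argument on \eqref{Vl42} gives $\|\rho_t\|_{L^\infty}\leq\max\{\|\rho_0\|_{L^\infty},z\}$, and the hypotheses $\|\rho_0\|_{L^\infty}\leq\frac{1}{\alpha_0}$ and $\max\{\frac1{\alpha_0},z\}<\frac1\alpha$ guarantee this bound stays strictly below $\frac1\alpha$, placing $E_{\rho_t}$ in $\mathcal{E}_\alpha$ for all $t$ in the interval.

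**Closing and the main obstacle.** With $E_{\rho_t}\in\mathcal{E}_\alpha$ solving \eqref{V19} and agreeing with $B_{0,V}=E_{\rho_0}$ at $t=0$, the uniqueness clause of Theorem~\ref{Th13} (applied to $\tilde L_V$ via Proposition~\ref{estimativas}) identifies it with the local solution $B_{t,V}$, yielding \eqref{expsol}. I expect the analytic heart of the argument to be the a priori supremum bound on $\rho_t$: the convolution term $(\rho_t*\phi)(x)$ couples all spatial points, so one cannot argue pointwise in a naive way, and one must use the sign structure---nonnegativity of $\phi$ gives $e^{-(\rho_t*\phi)}\leq 1$ whenever $\rho_t\geq0$, and preservation of nonnegativity of $\rho_t$ itself must first be established (e.g.\ via the integral form $\rho_t=e^{-t}\rho_0+z\int_0^t e^{-(t-s)}e^{-(\rho_s*\phi)}\,ds\geq0$). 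Once nonnegativity and the comparison $\dot\rho_t\leq -\rho_t+z$ are in hand, the bound $\|\rho_t\|_{L^\infty}\leq\max\{\|\rho_0\|_{L^\infty},z\}$ follows, and everything else is the routine verification sketched above.
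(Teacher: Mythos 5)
Your proposal is correct and follows essentially the same route as the paper: the first assertion via Theorem~\ref{Thconv} with Proposition~\ref{Prop3} (taking $p=2$ and the same $N_\eps$, up to rewriting $\alpha_0\max\{\|\phi\|_{L^1},\tfrac{4\alpha_0^2}{e}\}$), and the second by computing $\vd B_{t,V}(\theta;x)=\rho_t(x)B_{t,V}(\theta)$, verifying that the exponential ansatz solves \eqref{V19} precisely when $\rho_t$ solves \eqref{Vl42}, and invoking uniqueness together with the bound $\|\rho_t\|_{L^\infty}\leq\max\{\tfrac{1}{\alpha_0},z\}<\tfrac{1}{\alpha}$ to stay in $\mathcal{E}_\alpha$. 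The only deviation is that you sketch a comparison/Gronwall proof of that a priori $L^\infty$ bound, whereas the paper simply cites it from the proof of Theorem~3.3 in \cite{FKK10}.
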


\begin{proof}
The first part follows directly from Proposition~\ref{Prop3} and 
Theorem~\ref{Thconv} for $p=2$ and $N_\eps=\eps z\|\phi\|_{L^\infty}\alpha_0
e^{\frac{\|\phi\|_{L^1}}{\alpha}}\max\bigl\{\|\phi\|_{L^1},\frac{4\alpha_0^2}{e}\bigr\}$.

Concerning the last part, we begin by observing that it has been shown in 
\cite[Proof of Theorem 3.3]{FKK10} that given a $\rho_0\in L^\infty$ such that
$\|\rho_0\|_{L^\infty}\leq\frac{1}{\alpha_0}$, the solution $\rho_t$ to 
\eqref{Vl42} (which existence has been proved in \cite{FKKoz2011})
fulfills $\|\rho_t\|_{L^\infty}\leq\max\{\frac{1}{\alpha_0},z\}$. In this way, 
the assumption $\max\{\frac{1}{\alpha_0},z\}<\frac{1}{\alpha}$ implies that
$B_{t,V}$, given by \eqref{expsol}, fulfills $B_{t,V}\in\mathcal{E}_\alpha$. 
Then, by an argument of uniqueness, to prove the last assertion amounts to 
show that $B_{t,V}$ solves equation \eqref{V19}. For this purpose we note that 
for any $\theta,\theta_1\in L^1$ we have
\[
\frac{\partial}{\partial z_1}
B_{t,V}(\theta+z_1\theta_1)\biggr|_{z_1=0}=B_{t,V}(\theta)\int_{\mathbb{R}^d}dx
\rho_t(x)\theta_1(x),
\]
and thus $\vd B_{t,V}(\theta;x)= B_{t,V}(\theta) \rho_t(x)$. Hence, for all 
$\theta\in L^1$,
\[
(\tilde L_V B_{t,V})(\theta)=
-B_{t,V}(\theta)\left(\int_{\R^d}dx\,\theta(x)\rho_t(x) - z
\int_{\R^d}dx\,\theta(x) \exp\left(-(\rho_t*\phi)(x)\right)\right).
\]
Since $\rho_t$ is a classical solution to \eqref{Vl42}, $\rho_t$ solves a weak 
form of equation \eqref{Vl42}, that is, the right-hand side of the latter 
equality is equal to
\[
B_{t,V}(\theta)\frac{d}{dt}\int_{\mathbb{R}^d}dx\,
\rho_t(x)\theta(x) = \frac{\partial}{\partial t}
B_{t,V}(\theta).\qedhere
\]
\end{proof}

\section*{Appendix: Proofs of Theorems \ref{Th1} and
\ref{Thconv}}

\begin{proof}[Sketch of the proof of Theorem \ref{Th1}.]
For some $t>0$ which later on will be properly chosen, let us consider the 
sequence of functions $(u_n)_{n\in\mathbb{N}_0}$ with 
$u_{0}(t)\equiv u_0\in\mathbb{B}_{s_0}$ and
\begin{equation*}
u_n(t):= u_0 + \int_0^t (Au_{n-1})(s) d s, \quad n \in \mathbb{N}.
\end{equation*}
By an induction argument, it is easy to check that $u_n(t)\in\mathbb{B}_s$ 
for any $s<s_0$ and, in an equivalent way, the sequence may be rewritten as   
\begin{equation}
\label{8}
u_n(t) = u_0 + \sum_{m=1}^n \frac{t^m}{m!}A^mu_0.
\end{equation}

Fixed an $0<s<s_0$, let us now consider a partition of the interval 
$\left[s,s_0\right]$ into $m$ equals parts, $m\in\mathbb{N}$. That 
is, we define $s_l := s_{0}-\frac{l(s_{0} -s)}{m}$ for $l = 0, \dots , m$. By 
assumption, observe that for each $l = 0, \dots , m$ the linear mapping 
$A:\mathbb{B}_{s_{l}}\to\mathbb{B}_{s_{l+1}}$ verifies
\[
\|A\|_{s_{l}s_{l+1}}:=\|A\|_{\mathbb{B}_{s_{l}}\mapsto\mathbb{B}_{s_{l+1}}}\leq\frac{mM}{s_0-s},
\]
and thus
\begin{equation}
 \label{10C}
 \| A^m\|_{ s_0 s}  \leq \|A\|_{s_0s_1} \cdots \|A\|_{s_{m-1}s}
 \leq \left(\frac{mM}{s_0-s}\right)^m.
\end{equation}
From this and the Stirling formula follow the convergence of the series
\[
\sum_{m=1}^n \frac{t^m}{m!}\|A^mu_0\|_s\leq 
\|u_0\|_{s_0}\sum_{m=1}^n \frac{m^m}{m!}\left(\frac{Mt}{s_0-s}\right)^m
\]
whenever $\frac{tM}{s_0-s}<\frac{1}{e}$. This means that for all
$t<\frac{s_0-s}{eM}$ the sequence \eqref{8} converges in $\B_s$ to the 
function 
\begin{equation*}
u(t) := u_0 + \sum_{m=1}^\infty \frac{1}{m!}t^m A^mu_0.
\end{equation*} 
Moreover, setting $\delta:=\frac{1}{e M}$, $M=M(s,s_0)$, this convergence is 
uniform on any interval $[0,T]\subset[0,\delta(s_0-s))$. Similar arguments 
show that an analogous situation occurs for the series 
\begin{equation}\label{der}
\sum_{m=1}^\infty \frac{1}{m!}\frac{d}{dt}t^m A^mu_0=\sum_{m=0}^\infty \frac{1}{m!}t^mA^{m+1}u_0.
\end{equation}
This shows that on the time-interval $(0,\delta(s_0-s))$ $u$ is a continuously 
differentiable function in $\B_s$. 

Of course, these considerations hold for any $s_1\in\left(s,s_0\right)$, 
showing that the sequence \eqref{8} also converges in the space $\B_{s_1}$ 
uniformly to a function $\tilde u$ on any time interval 
$[0,T]\subset[0,\delta_1(s_0-s_1))$, $\delta_1:=\frac{1}{eM_1}$, 
$M_1=M_1(s_0,s_1)$. On the other hand, due to the continuity of $M(s_0,\cdot)$ 
on $(0,s_0)$, for each $t\in[0,\delta(s_0-s))$ fixed there is an 
$s_1\in(s,s_0)$ such that $t\in[0,\delta_1(s_0-s_1))$. As a result, $u_n(t)$ 
converges to a $\tilde{u}(t)$ in the space $\B_{s_1}\subset\B_s$. Since
\[
\|\tilde{u}(t)-u(t)\|_{\B_s}\leq\|\tilde{u}(t)-u_n(t)\|_{\B_{s_1}}+\|u(t)-u_n(t)\|_{\B_s},
\]
it follows that $\tilde{u}(t)=u(t)$ in $\B_s$. In other words, 
$u(t)\in\B_{s_1}$. Therefore, $u(t)$ is in the domain of $A:\B_{s_1}\to\B_s$, 
and thus $Au(t)\in\B_s$. Since this holds for every $t\in[0,\delta(s_0-s))$, 
the convergence of the series \eqref{der} then implies that $u$ is a solution 
to the initial value problem (\ref{V1}). To check the uniqueness see 
e.g.~\cite[pp.~16--17]{T68}.
\end{proof}

\begin{proof}[Proof of Theorem \ref{Thconv}.] To prove this result amounts to 
check the convergence. Following the scheme used to prove Theorem \ref{Th1}, 
we begin by recalling that in that proof each solution $u_\eps$, $\eps\geq 0$, 
to \eqref{V1eps} was obtained as a limit in $\B_s$ of
\begin{equation*}
u_{\eps,n}(t) = u_\eps + \sum_{m=1}^n \frac{1}{m!}t^mA_\eps^m u_\eps,
\end{equation*}
where $t\in[0,\delta(s_0-s))$ with $\delta=\frac{1}{eM}$. Thus, for each
$\eps'>0$, there is an $n\in\mathbb{N}$ such that
\begin{align}\nonumber
\|u_\eps(t)
-u_0(t)\|_s\leq&\,\|u_\eps(t)-u_{\eps,n}(t)\|_s+\|u_{\eps,n}(t)-u_{0,n}(t)\|_s+\|u_{0,n}(t
)-u_0(t)\|_s\\<&\,\frac{\eps'}{2}+\|u_\eps-u_0\|_s+\sum_{m=1}^n\frac{t^m}{m!}
\|A_\eps^mu_\eps-A_0^mu_0\|_s\nonumber\\\leq&\,\frac{\eps'}{2}+\|u_\eps-u_0\|_s+\sum_{m=1}^n\frac{t^m}{m!}
\|A_\eps^m(u_\eps-u_0)\|_s\nonumber\\&+\sum_{m=1}^n\frac{t^m}{m!}
\|(A_\eps^m-A_{0}^m)u_0)\|_s.\label{est}
\end{align}
Observe that by \eqref{10C}
\[
\|A_\eps^m(u_\eps-u_0)\|_s\leq\biggl(\frac{mM}{s_0-s}\biggr)^m\|u_\eps-u_0\|_{s_0}.
\]
To estimate \eqref{est} we proceed as in the proof of Theorem \ref{Th1}. For 
this purpose, we will use the decomposition
\begin{align}
 \label{U2}
  A_\eps^m - A_0^m = & \left( A_\eps- A_0 \right)  A_\eps^{m-1} + A_0
\left( A_\eps - A_0 \right)  A_\eps^{m-2}+ \nonumber\\
 &  + \cdots + A_0^{m-2} \left( A_\eps- A_0 \right)
  A_\eps + A_0^{m-1} \left( A_\eps- A_0 \right) .\nonumber
\end{align}
Then, considering again a partition of the interval $\left[s,s_0\right]$ 
into $m$ parts and the points $s_l = s_{0}-\frac{l(s_{0} -s)}{m}$, 
$l = 0, \dots , m$, one finds the estimate 
\begin{align*}
\|(A_\eps^m-A_{0}^m)u_0)\|_s&\leq\sum_{l=0}^{m-1}
\|A_\eps-A_0\|_{s_ls_{l+1}}\biggl(\frac{mM}{s_0-s}\biggr)^{m-1}\|u_0\|_{s_0}\\
&\leq\sum_{k=1}^p\frac{N_\eps}{(s_0-s)^{k-1}}\biggl(\frac{mM}{s_0-s}\biggr)^m\frac{m^k}{M}\|u_0\|_{s_0}.
\end{align*}
As a result, defining for each $t\in\bigl[0,\delta (s_0-s)\bigr)$, 
$\delta=\frac{1}{eM}$,
\[
f_q(t):=\sum_{m=1}^\infty
\frac{m^q}{m!}\biggl(\frac{tmM}{s_0-s}\biggr)^m<\infty, \quad q\geq0,
\] 
we obtain from the previous considerations the estimate
\[
\|u_\eps(t)-u_0(t)\|_s<\frac{\eps'}{2}+\|u_\eps-u_0\|_s+\|u_\eps-u_0\|_{s_0}f_0(t)+\sum_{k=1}^p\frac{N_\eps}{M(s_0-s)^{k-1}}\|u_0\|_{s_0}f_k(t).
\]
Here we observe that, by assumption, $u_\eps$ converges in $\B_{s_0}$ to 
$u_0$. Thus, by the definition of a scale of Banach spaces, this convergence 
also holds in $\B_s$. Therefore, for small enough $\eps$, one has 
$\|u_\eps(t) -u_0(t)\|_s<\eps'$, which completes the proof.
\end{proof}

\subsection*{Acknowledgments}

Financial support of DFG through SFB 701 (Bielefeld University) and FCT through
PTDC/MAT/100983/2008 and ISFL-1-209 are gratefully acknowledged.


\end{document}